\documentclass[10pt,twocolumn,graphicx,psfig,amssymb,amsthm,amsmath]{IEEEtran}
\IEEEoverridecommandlockouts
\usepackage{cite}
\usepackage{amsmath,amssymb,amsfonts}
\usepackage{amsthm}
\usepackage{bm}
\usepackage{makecell}
\usepackage{enumitem}
\makeatletter
\def\@upn{}
\makeatletter
\usepackage{algorithmic}
\usepackage{graphicx}
\usepackage{bbm}
\usepackage{textcomp}
\usepackage{xcolor}
\theoremstyle{plain}
\newtheoremstyle{italicheader}
{3pt}                
{3pt}                
{ \normalfont}       
{}                   
{\itshape}           
{.}                  
{.05em}               
{}                   
\newtheorem{theorem}{Theorem}
\newtheorem{remark}{Remark}[theorem]
\newtheorem{corollary}{Corollary}[theorem]
\usepackage[left=0.56in,right=0.56in]{geometry}
\newgeometry{textwidth=7.25in,top=0.7in, bottom=1.05in}
\def\BibTeX{{\rm B\kern-.05em{\sc i\kern-.025em b}\kern-.08em
		T\kern-.1667em\lower.7ex\hbox{E}\kern-.125emX}}
	\DeclareMathOperator\erfc{erfc}
	\DeclareMathOperator\erf{erf}
\begin{document}
	\title{Multi-Axis Concentration Modulation for Mobile Molecular Communication Systems
	}
	\author {Muskan Ahuja 
		 and Abhishek K. Gupta 
		 \vspace*{-3em}
		\thanks{This research is supported by Qualcomm (6G University Research), IITK and SERB India (Grant 2022/0390 and CRG/2023/005206).
			
			Muskan Ahuja is with the Computer Science and Engineering Dept., South Asian University (e-mail: muskanahuja@sau.int) and Abhishek K. Gupta is with the Electrical Engineering Dept., IIT Kanpur, India (e-mail: gkrabhi@iitk.ac.in). A part of this work has been presented in IEEE WCNC 2025 \cite{minee}.}}
	\maketitle
	\begin{abstract}
Molecular communication (MC) is emerging
paradigm that employs molecules as information carriers, inspired
by biological signaling processes. Existing modulation schemes
such as on–off keying (OOK), although simple to implement, suffer from high
error probability in dynamic or hard-to-estimate channels due to
their dependence on accurate channel information (CI).
This work develops a unified MC constellation framework that allows higher order modulation across multiple dimensions and designs efficient constellation for dynamic MC.
We propose a general multi-axis concentration modulation (MAxCM($K$,$M$)) of modulation order $M$, utilizing $K$-dimensional constellation space with each axis corresponding to a particular molecular type, and information is jointly encoded in their concentrations. The corresponding ML decoders are derived for both static and dynamic MC under exact and partial CI. We show that the use of MAxCM can provide improvements in spectral efficiency (SE) and error rate. We then focus on a special subclass, namely multiple-axis ratio shift keying (MAxRSK), that encodes information into the concentration ratios. Its ML decoder is shown to be a weighted combiner, and design constraints are derived to enable channel-independent decoding. We study one such example, symmetric binary RSK (SBRSK), to show its robustness in dynamic channel conditions compared to OOK.
Numerical investigations show significant performance gains over OOK and provide insights into optimal constellation design and receiver configurations. 
	\end{abstract}
	\section{Introduction}
	 MC is a form of bio-inspired communication that uses molecules as information carriers, mimicking how biological systems, such as cells, communicate in nature. MC involves the transmission of information via signaling molecules (SMs) from a transmitter (TX) to a receiver (RX) via a medium utilizing a propagation mechanism such as diffusion \cite{survey}. Owing to its biocompatibility and energy efficiency, MC shows potential for connecting different bio-nanodevices, such as engineered bacteria and nano biosensors. This enables new healthcare applications, including targeted drug delivery, disease detection, tissue engineering, and continuous health monitoring within the internet of bio-nano things (IoBNT) network \cite{iobnt}.
	
		While traditional MC models often assume TX and RX nanomachines to be static, many realistic scenarios involve mobile molecular communication (MMC), where TXs and RXs can move through the medium over time. This dynamic setting aligns more closely with biological environments, where nanoscale devices such as mobile bacteria, vesicles, nanorobots, or
		drug carriers, naturally move within fluidic medium. The movement of devices can improve information transfer rates under certain conditions by improving signal reception via increased propagation \cite{motors}, \cite{mul}.
		The MMC paradigm enables a broad array of novel applications, including network functionalities (e.g., relay or multi-hop communication), biomolecular sensing in dynamic environments, and critical biomedical applications, such as cooperative drug delivery, where multiple mobile nanomachines coordinate to target specific regions more effectively \cite{randomwalk}.
		However, MMC brings many challenges that are absent in static systems.
		These include difficulty in deriving precise channel models due to constant variation in distance between devices and inherent randomness in their motions. Moreover, the direction and distance between MMC nodes are constantly changing. The randomness in the locations of TXs and RXs can deteriorate the reliability of MMC. Understanding the impact of mobility on the performance of MC is crucial to design effective MMC systems, including efficient modulation methods, and decoding strategies \cite{challengeMMC}. 
		\vspace{-1em}
	\subsection{Related work}	
	In the past, researchers have worked on the MC channel characterization \cite{chn}, performance, and information-theoretic analyses of different modulation schemes with different types of MC RXs \cite{markov, markov2, abhisheksir}.  
	Various modulation techniques may be used in MC to encode data on different properties of SMs, such as their concentration \cite{c2}, time of release \cite{time}, \cite{t2}, their type \cite{type}, \cite{t1}, index of the TX \cite{spatio0, spatio1, spatio2} or any combination of these \cite{h1}, \cite{mod}, \cite{c1}. 
	 In particular, concentration based modulation schemes encode the information using concentration of the same or different types of molecules. Out of these, on-off keying (OOK) modulation where a certain number of molecules are sent for bit 1 compared to no transmission for bit 0, has gained popularity due to its simplicity and efficiency. Systems with OOK modulation utilize a threshold-based decoder which requires the exact knowledge of channel to determine the threshold. In the static scenario with a known environment, where TXs and RXs are stationary, the CI can be obtained over time. However, in realistic scenarios, TX and/or RX may not be stationary and can move in the medium via some mechanism including Brownian motion \cite{randomwalk}.
	The mobility leads to a variation in the distance between TX and RX units and hence, in the channel impulse response (CIR) with time \cite{stochastic}. Hence, an exact knowledge of the channel may
	not be available or maybe difficult to evaluate, adding another layer of complexity.
	   Therefore, the OOK-based MC systems suffer from high bit error probability in unknown/ difficult to estimate channel environments. Such dynamic scenarios demand new modeling, energy efficient modulation, and detection techniques that can accommodate and even exploit the variability in the communication channel caused by mobility.
	   
	   The work in \cite{novelRSK} proposed ratio-shift keying (RSK) modulation that encodes information in the ratio of the concentration of two isomers of SMs and the receiver decodes the symbol utilizing the ratio of their received concentration. Due to almost identical shapes, isomers demonstrate similar diffusion coefficients in a medium and, therefore, the same propagation characteristics. The authors argued that the relative concentrations of two molecules' type remains constant regardless of the channel value, therefore, RSK does not require CI for decoding. Due to its inherent robustness to channel uncertainties arising due to the various reasons including movement of transceivers, RSK can be utilized for dynamic MC scenarios. However, a detailed analysis
	   on the performance of RSK in dynamic scenario was not presented. Authors in \cite{rskmain} studied the bit error rate (BER) performance of RSK for an MC system with a ligand-receptor RX that measures its receptor’s bound time duration to estimate the transmit symbol’s value. While the paper considers a simple threshold/count detector as decoder for OOK modulation, it utilizes the exact bound/unbound time durations to decode for RSK modulation. Due to additional complexities required in measuring the bound state durations, this comparison as performed in \cite{rskmain} may be considered unfair, and may bring additional difficulties in realizing such systems. It may be more interesting to investigate the performance of RSK modulation in dynamic scenarios under a similar decoding mechanism as OOK. 
	   An extension to multiple types of molecules was studied in \cite{mrsk} for static MC systems. While the study represents a step toward utilizing ratio-based signaling to reduce the need for CI, the study was limited to static MC system and did not analyze the performance for mobile scenarios. Another limitation of \cite{mrsk} is the lack of normalization in total number of molecules utilized for each symbol in comparison with OOK. As a consequence, the RSK in \cite{mrsk} inherently utilized a larger number of molecules per transmitted symbol than simpler schemes like OOK, leading to an unfair comparison where RSK appeared to outperform OOK with large margin. It is important to study the performance under normalization constraints over total molecular budget and bit time duration, to offer a fair and resource-aware comparison across different modulation schemes.
	   Further, above works have considered a noise-less environment. In practical scenario, there can be unintended molecules present in the medium due to previous transmissions or other sources. In the presence of noise, the ratio of the concentration of two types of molecules no longer remains invariant of the channel. Therefore, for an efficient use of RSK in dynamic MC, it remains crucial to explore how to design constellation to maintain channel invariance in the presence of noise which was not discussed in the past literature. 

Another important aspect of designing an efficient constellation for a modulation is to accommodate as many symbols as possible. While more symbols can be placed within the same constellation space at the expense of reduced separation between symbols, resulting in the loss of error performance, the use of additional orthogonal signals can expand the constellation space. In conventional wireless systems, orthogonal basis signals are often realized through differences in phase or frequency which form a signal space. The use of multiple basis vectors provides opportunity to accommodate more symbols in the constellation space resulting in a higher modulation order. The decoding of these symbols typically requires accurate knowledge of channel, as channel
impairments can rotate or distort the constellation, leading to symbol misclassification and increased decoding errors \cite{dc}. Example of modulations utilizing multiple orthogonal basis include quadrature amplitude modulation (QAM) and orthogonal frequency divison modulation (OFDM). 
These aspects motivated us to exploit multiple orthogonal signal dimensions in MC, where orthogonality can be realized through different types of SMs that propagate with similar physical characteristics but are distinguishable at the receiver. The use of similar SMs provides us an opportunity to utilize their ratio to mitigate the effect of channel variations on the symbols' reception. Furthermore, unlike phase or frequency-based schemes in wireless systems, where a signal may rotate due to channel effects and
decoding needs accurate CI, the molecular types in MC can be taken to be immutable (e.g., molecules of a type usually do not transform into another type during propagation), making the scheme inherently resistant to rotational errors in the molecular signal space. Therefore, it may be possible to decode the symbol without requiring CI. This indicates the existence of a multidimensional signal space which can be efficiently harnessed for designing optimal signal constellation for MC systems in the presence of mobility or channel uncertainity. Therefore, the design of appropriate signal constellations tailored to the molecular medium and its multidimensional nature is crucial for improving the reliability, efficiency, and scalability of MMC systems, which is the main focus of this work.
	\vspace{-1.2em}
	\subsection{Contributions}
	While keeping a two-fold aim to develop a generalized constellation framework allowing the use of multiple dimensions in the constellation space with a higher order of modulation and to design an efficient constellation for a dynamic MC environment, in this work, we propose multi-axis concentration modulation (MAxCM) and present the corresponding ML decoder. We also design a special case of the proposed modulation where the corresponding decoder is invariant to the channel conditions, even in the presence of noise, making it suitable for a dynamic MC environment. The key contributions of this paper are as follows:
	\begin{itemize}[leftmargin=0.3cm]
		\item In this work, we propose MAxCM utilizing multi-axis constellation space where each orthogonal axis corresponds to a different type of SM. Here, the information is encoded jointly into the concentration/molecules-count of different types of molecules. This can be equivalently seen as encoding information into the sum concentration and ratio of concentrations of various types, analogous to amplitude and phase in QAM modulation in electromagnetic (EM) based communication. The proposed constellation is flexible and general enough to support arbitrary modulation order and dimension of constellation space. The framework naturally generalizes well-known schemes such as OOK, concentration shift keying (CSK) and RSK as special cases and provides a unified approach to study these modulations. In addition, it provides many novel modulations which has not been studied in prior MC work.
	\item We also present the corresponding ML decoder for static and dynamic MC systems for cases when either exact or partial information about the channel is available. Moreover, we show that the use of MAxCM can provide improvements in spectral efficiency (SE) and error rate.
	\item We then focus on a special subclass of MAxCM, namely multiple-axis ratio shift keying (MAxRSK), when information is encoded in the ratio of types of SMs. We first derive the corresponding ML decoder and show that it is a weighted combiner with weights dependent on the strength of the received signals. We characterize the decision regions for each symbol and show that the decision boundaries form a line/plane passing through the origin.
	\item We then derive constraints on the constellation symbols that allow the decoder not to require any CI. We present a few examples of such constellations, including symmetric binary RSK (SBRSK), and derive their bit error probability performance for static and dynamic environments considering both passive and fully absorbing receivers. We show the robustness of SBRSK in dynamic channel conditions compared to OOK while maintaining a comparable decoding complexity under constraints on molecular budget and bit duration.
	\item We demonstrate an approach for an extension to higher-order symmetric MAxRSK (SMAxRSK) while preserving channel-independent decoding properties. Importantly, the constellation is designed using geometric symmetry and 
	we provide rules to construct such constellation along with construction of specific examples.
	 This highlights the expressive power and scalability of the proposed MAxRSK framework beyond previously studied binary schemes.
	\item We also discuss practical cases when exact CI is not known. We show that the performance of SBRSK remains unaffected in such cases, whereas the performance of OOK significantly decreases when only channel statistical information (CSI) is available. We also evaluate the effect of imperfect CSI on OOK.
	\item We also show the variation of bit error probability of SBRSK with the number of transmitted molecules (molecular resource) for both passive and absorbing receivers in MMC systems. Extensive numerical investigations are conducted to evaluate system performance and extract design insights, such as optimal symbol construction, the performance gain of SBRSK over OOK as a function of node mobility, and the impact of receiver type and signal strength. Together, these features make SMAxRSK a compelling modulation strategy for future MC applications, especially in mobile and biologically dynamic environments.
	\end{itemize}
	\vspace{-1em}
	\section{System Model}
	In this paper, we consider a MC system with a TX and a RX located in a three-dimensional environment where a TX aims to convey information to the RX using SMs. Let the positions of TX and RX at time $t$ be denoted by ($X_{\text{TX}}(t), Y_{\text{TX}}(t), Z_{\text{TX}}(t)$) and ($X_{\text{RX}}(t), Y_{\text{RX}}(t), Z_{\text{RX}}(t)$) respectively.
	The RX decodes information based on the molecules it observes over time. The probability that a molecule released by the TX at time $t$ is observed at the RX after a time interval $\tau_{s}$ is denoted by $h(t, \tau_{s})$. This probability characterizes the channel response and depends on factors such as molecular dynamics, medium characteristics, and receiver type. For example, for an absorbing receiver, each molecule that reaches the RX boundary is absorbed and counted, after which it no longer exists in the environment. Conversely, a passive receiver continuously measures the concentration of molecules within its observation volume without absorption. To allow the use of multiple types of SMs, we consider the TX to be capable of releasing $K$ types of SMs with types $T_{1}$ to $T_{K}$. Similarly, the RX unit comprises of $K$ RX ports (RXPs), embedded with receptors specific to different molecular types.
	We consider SMs of all types with identical propagation characteristics so the effect of channel is the same for all molecules. One practical realization can be achieved using $K$ molecular isomers that differ structurally but have identical propagation characteristics while enabling type-specific detection at the RX. However, note that the proposed framework can be extended to accommodate molecules with distinct propagation characteristics. Let $r_{\text{RX}}$ denote the radius of each RX port. We consider both static and dynamic communication scenarios. In the static scenario, both TX and RX remain fixed in the environment. In the dynamic MC, they both can move within the medium. Hence, locations of TX and RX are not known a priori and they need to be estimated to determine the channel. The received signal is affected by noise due to molecules from the unintended sources present in the environment. Each molecular type is subjected to an independent noise and can be modeled as a Poisson random variable. 
		\begin{figure}[h]
		\centering
		\includegraphics[width=0.75\linewidth]{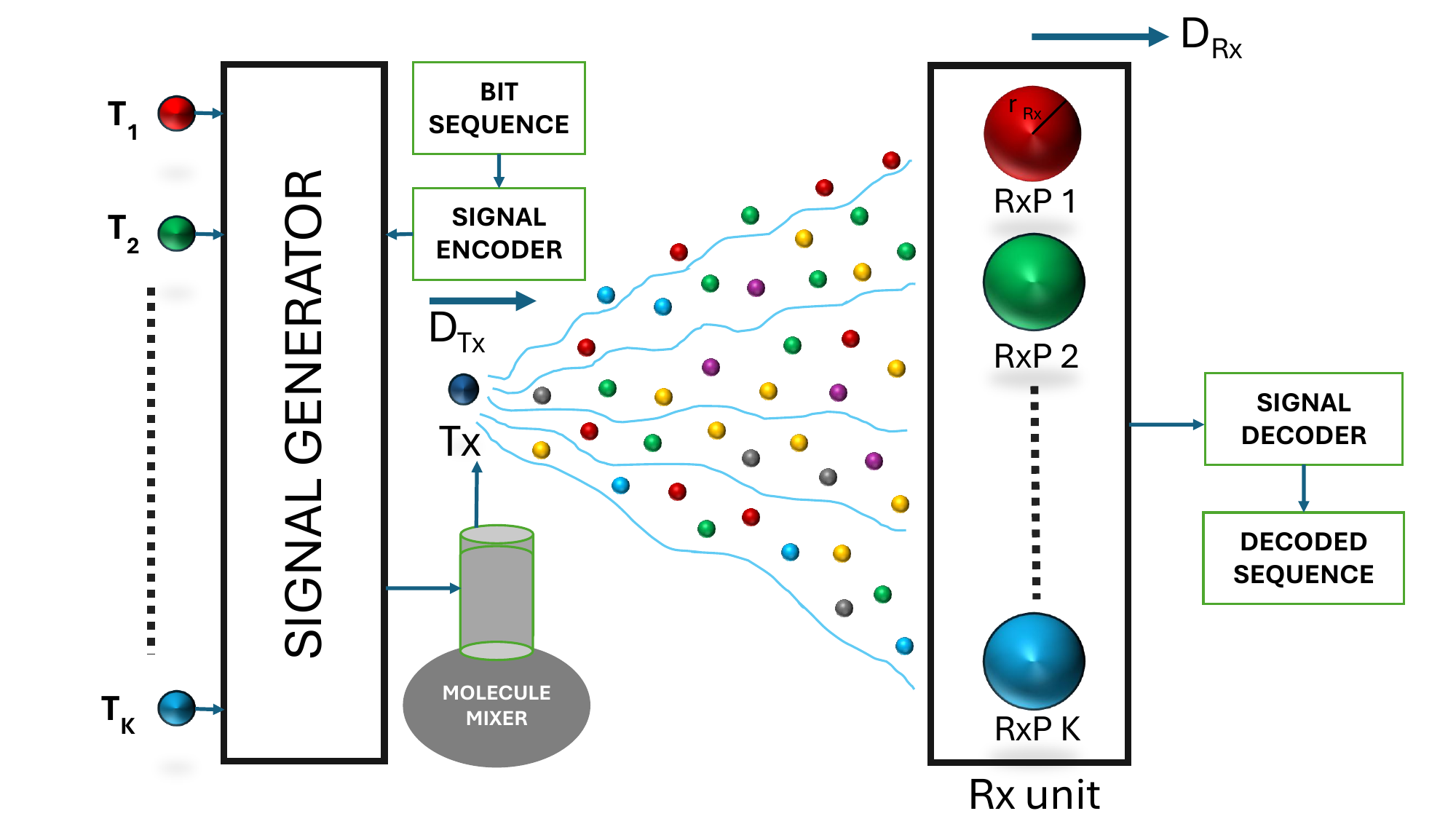}
		\caption{Demonstration of the considered system model for the MMC scenario employing MAxCM modulation scheme. The point TX is shown as a small blue sphere. The figure depicts the red sphere for Type 1 RXP, the green sphere for Type 2 RXP, and the light blue sphere for Type $K$ RXP. $D_{\text{TX}}$ and $D_{\text{RX}}$ are the diffusion coefficients of TX and RX units respectively.}
		\label{figure1}
	\end{figure}
	
	To simplify the system representation and analyze further, we will consider a diffusion-based MC scenario. In this scenario, the molecules propagate via Brownian motion, characterized by a diffusion coefficient denoted as $D_{\text{inf}}$. Note that the diffusion coefficient can vary depending on several factors like concentration gradient, temperature, medium properties, and other external factors \cite{survey}. We have considered diffusion coefficient to remain constant. Further, for the dynamic case, where TX and RX can move, we consider that they move via Brownian motion, which is a common approach for modeling the movement of different microorganisms and cells \cite{randomwalk}. In particular, we consider a transparent mobile point TX moving in the medium via diffusion with diffusion coefficient $D_{\text{TX}}$.
	The RX's mobility is denoted by diffusion coefficient $D_{\text{RX}}$ (cf. Fig. \ref{figure1}). For simplicity, we assume $D_{\text{RX}}= D_{\text{TX}}\stackrel{\Delta}{=} D_\mathrm{TR}$.
	At time $t$, each coordinate of TX ($X_{\text{TX}}(t), Y_{\text{TX}}(t), Z_{\text{TX}}(t)$) and RX ($X_{\text{RX}}(t), Y_{\text{RX}}(t), Z_{\text{RX}}(t)$) is distributed as gaussian random variable with variance $2D_\mathrm{TR}t$ around the initial position \cite{rskmain}.
	Let ($x^{0}_{\text{TX}}, y^{0}_{\text{TX}}, z^{0}_{\text{TX}}$) and ($x^{0}_{\text{RX}}, y^{0}_{\text{RX}}, z^{0}_{\text{RX}}$) denote the initial position of TX and RX unit, and $\mathcal{N}(\mu,\sigma^{2})$ denotes the normal distribution with mean $\mu$ and variance $\sigma^{2}$. Moreover, we assume that initially the TX is located at the origin and the RX unit is located at ($x^{0}_{\text{RX}}=r(0), y^{0}_{\text{RX}}=0, z^{0}_{\text{RX}}=0$).

	 The probability distribution function (PDF) of the relative distance $r(t)$ between TX and RX is given by \cite{absorb}
		\begin{align}
			\hspace{-0.7em}f_{r(t)}(r)\!=\!\dfrac{r}{r(0)\!\sqrt{2\!\pi \!D_\mathrm{TR}t}}\!\exp\!\left(\!-\dfrac{r^{2}\!+\!r(0)^{2}}{8\!D_\mathrm{TR}t}\!\right)\!\!\sinh\!\left(\!\dfrac{r(0)r}{4D_\mathrm{TR}t}\!\right). \label{pdfr}
		\end{align}
	The mobility of TX and RX dynamically modifies the channel. If the receiver is passive, the probability that a SM released from the TX at time $t$ is observed inside the RX after $\tau_{s}$ time from the transmission is expressed as \cite{stochastic}, \cite{rskmain}
		\begin{align}
			\hspace{-0.6em}h_{\mathrm{P}}(t, \tau_{s})=\dfrac{V_{\text{RX}}}{(4\pi (D_{\text{inf}}+D_\mathrm{TR})\tau_{s})^{3/2}} ~e^{-\dfrac{r(t)^{2}}{4(D_{\text{inf}}+D_\mathrm{TR})\tau_{s}}}, \label{cir}
		\end{align}
		where $V_{\text{RX}}$ is the volume of the spherical RX. 
		Moreover, for an absorbing RX, the probability of absorption of SMs during bit duration $T_{b}$ after transmission at time $t$ is given as \cite{absorb}
		\begin{align}
			h_{\mathrm{A}}(t, T_{b})=\dfrac{r_{\text{RX}}}{r(t)} ~\erfc\left(\dfrac{r(t)-r_{\text{RX}}}{2\sqrt{(D_{\text{inf}}+D_\mathrm{TR})T_{b}}}\right). \label{CIR}
		\end{align}
		Let $h$ denote the probability that molecule is detected at the receiver. Note that under Brownian motion of molecules and TX/RX devices, $h$ is equal to $h_{\mathrm{P}}(t, \tau_{s})$ and $h_{\mathrm{A}}(t, T_{b})$ as given in (\ref{cir}) and (\ref{CIR}) for passive and active receivers respectively. However, we consider a general setup that includes any arbitrary propagation mechanism with the constraint that the motion of each molecule is independent of each other, which is a standard assumption in MC under various propagation
		methods including Brownian motion.
	\section{Multi-Axis Concentration Modulation}
	In this work, we propose a generalized modulation framework termed multi-axis concentration modulation (MAxCM), utilizing a constellation space $C$ with multiple signal axes. In particular, we consider a $K$ dimensional constellation space where $i^{th}$ axis $\hat{s_{i}}$ represents the signal denoting concentration (or equivalently number of molecules) of $i^{th}$ type of SMs. Due to reception assumptions that the receiver can distinguish among these $K$ types of molecules, the signal axes are orthogonal to each other. Here, the information is encoded into joint concentrations of all types of molecules. As we clarify later, we can equivalently see that the information is encoded into the sum concentration and the ratio of concentrations of various types, analogous to amplitude and phase, respectively, in QAM modulation used in an EM based communication. Let $M$ denote the modulation order. Hence, MAxCM $\mathcal{M}(K,M)$ consists of $M$ constellation symbols where each constellation symbol $b$ is denoted by its coordinate point in the multi-dimensional constellation space $C$. Each symbol $b$ is represented as
	$	\mathbf{a}_b = (a_{1b}, a_{2b}, \dots, a_{Kb}),$
	where $a_{ib}$ is the number of molecules of type $i$ used for symbol $b$ for $b\in\{0,\dots, M-1\}$. Note that unlike conventional wireless communication, where the constellation spans all four quadrants of the in-phase and quadrature axis, molecular concentrations cannot take negative
	values. Therefore, all molecular constellations are restricted to the first quadrant.

Generally speaking, two distinct classes of constellations can emerge for MAxCM, along with a hybrid of the two and many variants. First class corresponds to rectangular lattice constellations which appear when encoding is carried out in terms of independent concentration levels for each
type of molecule. This results in the cartesian product of the constellation points of each axis resulting in a rectangular constellation similar to QAM in conventional communication. The second class corresponds to ratio-based constellations that encode the information in the relative concentration, i.e., the ratio of concentrations of molecules of different  types. These constellations maintain a constant total number of molecules per symbol across all
constellation points, producing
constellation symbols over a hyperplane in the
first quadrant (i.e. a line in 2D). 

We now show some specific examples of MAxCM and highlight their relation with existing modulation methods, wherever applicable.
\begin{itemize}[leftmargin=0.3cm]
\item \textbf{\textit{Single-axis CM, i.e., $\bm{K=1}$} of order $\bm{M^2}$}:
For single molecular type, the MAxCM becomes CSK (See Fig. \ref{dr}(a)), where a different concentration (i.e. number of molecules) is released for different symbols.
\item \textbf{\textit{MAxCM with constellation symbols located on the axes}}:
Consider a MAxCM($K,K$) of modulation order $K$ where $b$$^\text{th}$ constellation symbol is $a_{b}= ce_{b}$, where $e_{b}$'s denote unit vector in the direction of $b$$^\text{th}$ axis. This modulation is also known as molecular shift keying (MoSK).
\begin{figure}[h]
	\centering
	\includegraphics[width=\linewidth]{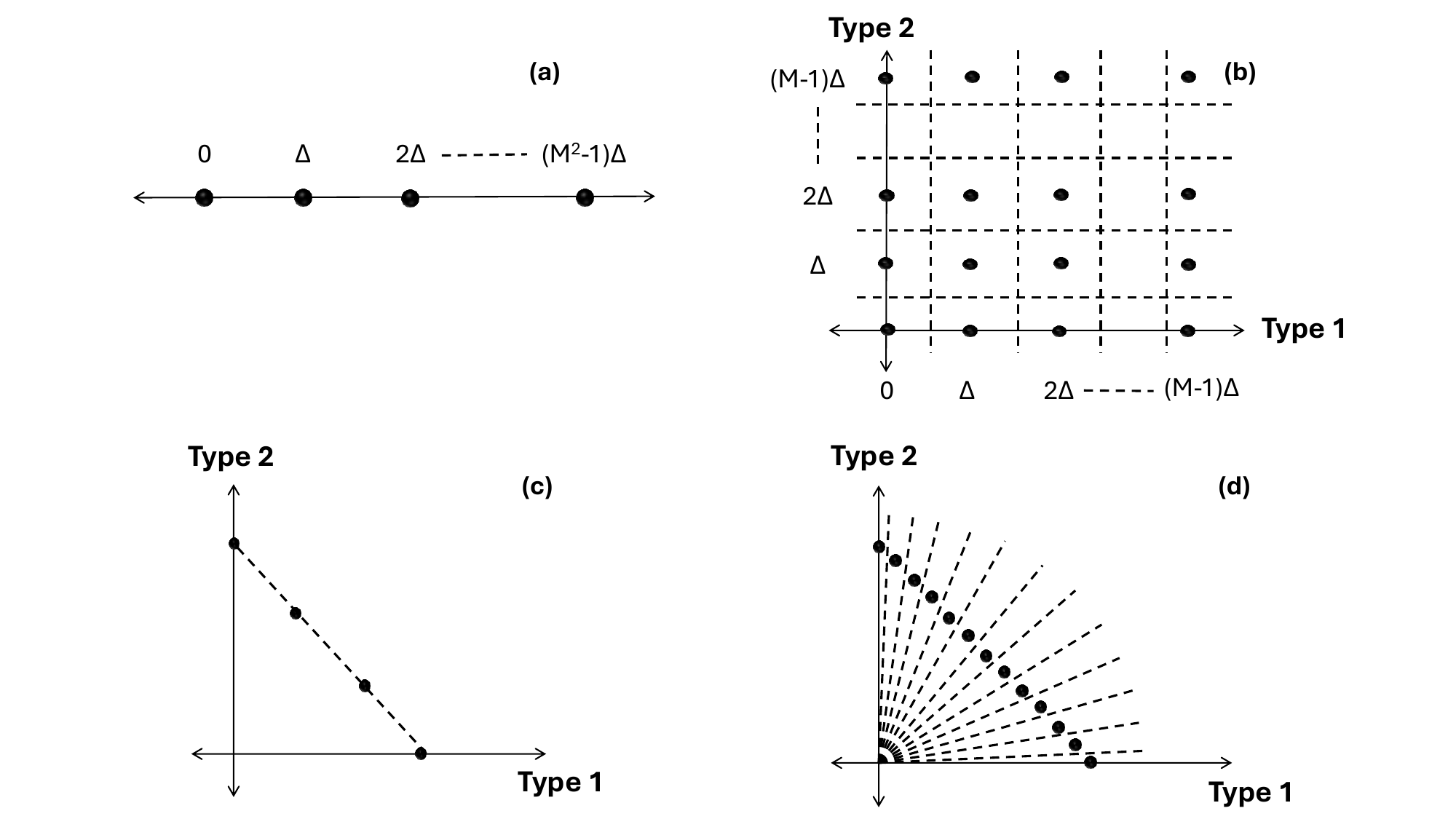}
	\caption{An illustration of two distinct classes of MAxCM constellation. First class corresponds to rectangular lattice constellation (first row) as (a) Single axis CM($K=1,M^2$) and (b) MAxCM(2,16). Second class corresponds to ratio encoding in line constellation (second row) as (c) MAxRSK(2,4) with symbols at $(c, 0), (2c/3, c/3), (c/3, 2c/3), (0, c)$,  and (d) MAxRSK(2,16). Ratios can be seen as analogous to phase in conventional QPSK constellation.}
	\vspace{-1em}
	\label{dr}
\end{figure}
\item \textbf{\textit{MAxCM with constellation symbols at different concentrations and ratios of two types of molecules}}: Fig. \ref{dr}(b) shows a variant of MAxCM (rectangular lattice constellation) with $K=2$ and $M^2=16$. This represents a modulation where the information is encoded in both total concentration and ratio of two types of molecules. Hence, it can be seen as analogous of QAM (total concentration as amplitude) in conventional communication. This variant is a novel modulation which has not been presented in any prior MC work.
\item \textbf{\textit{MAxCM with constellation symbols at different ratios of two types of molecules}}:
Let us consider a MAxCM($2,M$) with $K=2$, where $b$$^\text{th}$ constellation symbol is given as 
\begin{equation}\label{eq:2d-const}
	\mathbf{a}_{b}= \frac{c}{1+r_{b}}\,(1,\,r_{b})
	\rightarrow
	a_{1,b}=\frac{c}{1+r_{b}}, a_{2,{b}}=\frac{c r_{b}}{1+r_{b}},
\end{equation}
where $r_{b}$ denotes the ratio of number of molecules of type 2 with respect to number of molecules of type 1 for $b$$^\text{th}$ symbol. The total number of molecules is fixed at $c$, i.e., \(\,a_{1,b}+a_{2,b}=c\), and information is encoded only in the ratio \(r_{b}\) of two molecules, hence it is an example of ratio shift keying (RSK) (see Fig. \ref{dr}(c) and Fig. \ref{dr}(d)). The average concentration is $\bar{c}= c$. Such constellation can be seen as analogous of a QPSK constellation (ratio as phase). As explained earlier, all constellation points lie on a line due to the sum constraint on the transmitted molecules. 
\item 
The previous case can be extended to a higher value of $K$, by defining ratios with respect to first molecule type. For the $b^{\mathrm{th}}$ $K$-dimensional symbol, we define $K-1$ ratios $(r_{b2},r_{b3},\dots,r_{bK})$ with respect to first type. The symbol can be represented by the vector
\begin{equation}\label{eq:kd-const}
	\mathbf{a}_{b} \;=\; \frac{c}{1+\sum_{i=2}^{K} r_{bi}}\bigl(1,\,r_{b2},\,r_{b3},\dots,r_{bK}\bigr),
\end{equation}
which satisfies \(\sum_{i=1}^K a_{i,b}=c\). We term this \emph{multi-axis ratio shift keying} (MAxRSK). Here, the direction of the molecule concentration vector carries the information, not its magnitude.
As we will show later, during demodulation, the receiver estimates the transmitted symbol from the ratio of received number of SMs corresponding to each type. Since it utilizes ratios to decode transmitted symbols, it can be designed to be invariant to channel variation, hence not requiring any knowledge of channel which is one prime advantage of MAxRSK.
\end{itemize}
\vspace{-1em}
\subsection{Reception and Decoding Performance}
In an additive white Gaussian noise (AWGN) channel, the performance of  a particular modulation or constellation can be determined by its average power and separation between symbols. While these also serve as good indicators for performance in MC, they do not completely characterize the performance. 
Next, we present the decoding and BER analysis of MAxCM constellation to provide insights for constellation design. 
The received number of molecules of $i^{\text{th}}$ type at the $i^{\text{th}}$ RXP is given as $m_{i}=s_{i}+n_{i}$, where $s_{i}$ denotes the number of molecules of $i^{\text{th}}$ type intended for transmitted symbol $b$ and $n_{i}$ is the noise. Here, $s_{i}$ and $n_{i}$ are random variables distributed as \cite{diffMC} 
\begin{align}
	\hspace{-0.8em} ~s_{i} \sim \text{Binomial}(a_{ib},h), ~ n_{i} \sim \text{Poisson}(\lambda), \label{rm}
\end{align}
where $a_{ib}$ is the number of $i^{\text{th}}$ type molecules transmitted corresponding to symbol $b$, $h$ is the channel denoting the probability that a transmitted molecule is observed at the RX, and $\lambda$ is the mean number of noise molecules. Let $\mathbf{m}$ denotes the stacked received vector, i.e., $\mathbf{m}=(m_{1},m_{2}, \cdots, m_{K})$. From (\ref{rm}), the probability mass function (PMF) of the received signal $m_{i}$ is given as
\begin{align}
	\hspace{-0.6em}P_{m_{i}}(m_{i}|b)\!\!=\!\!\sum_{k=0}^{\min(a_{ib},m_{i})}\!\binom{a_{ib}}{k}\!\dfrac{\lambda^{m_{i}-k}}{(m_{i}-k)!}\!h^{k}\!(1-h)^{a_{ib}-k}\!e^{-\lambda}. \label{bin}
\end{align}
Note that the upper limit in the summation is due to the fact that the maximum number of intended SMs ($s_{i}$) received in a given bit interval can not be more than the transmitted number of molecules ($a_{ib}$). Since the reception at $K$ ports is independent, hence, the joint PMF of received molecules ($m_{1}, m_{2}, \cdots, m_{K}$) is
\begin{equation}
	P_{\mathbf{m}}(m_{1},m_{2}, \cdots, m_{K}|b)= \prod\nolimits_i P_{m_{i}}(m_{i}|b)\label{jointPMF}
\end{equation}
The ML decoding region for symbol $b$ is given by
\begin{align}
\mathsf{D}(b) =\left\{\mathbf{m} : P(\mathbf{m}|b) \geq P(\mathbf{m}|b') ~~\forall ~b' ~~\right\}. \label{dbb}
\end{align}
The above rule forms a partition of the whole constellation space. Note that for each symbol, its decoding region $\mathsf{D}(b)$ is the set of received vectors that are decoded as the symbol $b$.
\vspace{-1em}
\subsection{Decoding under Poisson Approximation}
In practical cases, $a_{ib}$'s are usually large and $h$ is small, therefore $s_{i}$ can be assumed to be distributed as a Poisson random variable with parameter $a_{ib}h$. Hence, we can write
\begin{align}
	&s_{i} \sim \text{Poisson}(a_{ib}h);~ n_{i} \sim \text{Poisson}(\lambda), \nonumber \\
	&\implies m_{i} \sim \text{Poisson}(a_{ib}h + \lambda). \label{rm2}
\end{align}
(\ref{rm2}) can also correspond to another practical scenario where the release of molecules from the TX is not perfect, leading to a random number $N$ of molecules being released, with $N$ being Poisson distributed.
Hence, the distribution of $\mathbf{m}$ is given as
	\begin{align}
		P(\mathbf{m}| b)=\prod_{i} e^{-(a_{ib}h+\lambda)} (a_{ib}h+\lambda)^{m_{i}}/m_{i}!, \label{pmb}
	\end{align}
where $\mu_{ib}=a_{ib}h+\lambda$ is the mean number of molecules received at RXP $i$ for transmit bit $b$. Now, we derive the decoding region of symbol $b$ as given in the following theorem:
\begin{theorem}
The ML decoding region for the symbol $b$ is
\begin{align}
	\mathsf{D}(b)=\{\mathbf{m}:(c_{b'}-c_{b})h \geq \sum_{i} m_{i} \log (\mu_{ib'}/\mu_{ib}) ~\forall~ b'\},\label{db}
\end{align}
where $c_{b'}= \sum_{i} a_{ib'}$ denotes the total number of molecules for symbol $b'$.
\end{theorem}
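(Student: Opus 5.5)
The plan is to substitute the product-Poisson likelihood (\ref{pmb}) into the ML rule (\ref{dbb}) and then simplify the resulting pairwise inequalities by taking logarithms. Fix a candidate $b'$ and work with the condition $P(\mathbf{m}|b)\geq P(\mathbf{m}|b')$. The factors $1/m_i!$ appear in both likelihoods and do not depend on the symbol, so they cancel. What remains is
\begin{align*}
\prod_i e^{-\mu_{ib}}\mu_{ib}^{m_i}\;\geq\;\prod_i e^{-\mu_{ib'}}\mu_{ib'}^{m_i}.
\end{align*}
Both sides are strictly positive because $\lambda>0$ gives $\mu_{ib}>0$. We may therefore take logarithms without changing the direction of the inequality, which gives
\begin{align*}
-\sum_i \mu_{ib}+\sum_i m_i\log\mu_{ib}\;\geq\;-\sum_i\mu_{ib'}+\sum_i m_i\log\mu_{ib'}.
\end{align*}

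Next we evaluate the linear terms using $\mu_{ib}=a_{ib}h+\lambda$. This gives $\sum_i\mu_{ib}=h\sum_i a_{ib}+K\lambda=c_bh+K\lambda$, and likewise $\sum_i\mu_{ib'}=c_{b'}h+K\lambda$. The $K\lambda$ terms cancel in the difference, leaving $\sum_i\mu_{ib'}-\sum_i\mu_{ib}=(c_{b'}-c_b)h$. Moving the logarithmic terms to the right-hand side then produces $(c_{b'}-c_b)h\geq\sum_i m_i\log(\mu_{ib'}/\mu_{ib})$. Requiring this for every $b'$ is exactly the region in (\ref{db}), and each step is reversible, so the two sets coincide.

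The only delicate point is that the logarithm must be well defined. This is immediate when $\lambda>0$. If noise were absent and some $a_{ib}=0$, then $\mu_{ib}=0$, and we would use the conventions $0^0=1$ and $\log 0=-\infty$ to handle those terms. Since the model assumes $\lambda>0$, this case does not arise, so the argument is essentially a direct computation.
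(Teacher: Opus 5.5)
Your proposal is correct and follows essentially the same route as the paper: both take the log of the product-Poisson likelihood (\ref{pmb}), cancel the symbol-independent $m_i!$ terms, use $\sum_i(\mu_{ib'}-\mu_{ib})=h(c_{b'}-c_b)$ since the $K\lambda$ terms cancel, and rearrange into (\ref{db}). Your added remark that $\lambda>0$ keeps every $\mu_{ib}$ positive, so the logarithms are defined and each step is reversible, is a small point the paper leaves implicit.
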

\begin{proof}
From (\ref{pmb}), the log-likelihood function corresponding to the symbol $b$ is 
\begin{equation}
	\ell(b)
	= \sum\nolimits_{i=1}^K \left( -\mu_{ib} + m_i \ln(\mu_{ib}) \right).
\end{equation}
Hence, the difference between log-likelihood of $b$ and $b'$ is
\begin{align}
	\ell(b) - \ell(b')= \sum\nolimits_{i=1}^K
	\left( -\mu_{ib} + \mu_{ib'} + m_i \ln\!\left(\mu_{ib}/\mu_{ib'}\right) \right).
\end{align}
The ML rule selects the symbol $b$ if
$\ell(b) - \ell(b') \ge 0$ for all $b' \neq b$. Hence, the decoding region for symbol $b$ is given by
\begin{small}
\begin{equation}
	\mathsf{D}(b)
	= \left\{
	\mathbf{m} :
	\sum_{i=1}^{K} (\mu_{ib'} - \mu_{ib})
	+ \sum_{i=1}^{K} m_i \ln\!\left(\frac{\mu_{ib}}{\mu_{ib'}}\right)
	\ge 0,
	\ \forall b' \neq b
	\right\}. \label{ca}
\end{equation}
\end{small}
Substituting $\mu_{ib} = h a_{ib} + \lambda$ and utilizing $c_b = \sum_{i=1}^{K} a_{ib}$, we note that
\begin{align}
	\sum_{i=1}^{K} \mu_{ib} = h c_b + K\lambda,\;\sum_{i=1}^{K} (\mu_{ib'} - \mu_{ib}) = h(c_{b'} - c_b).\label{iden}
\end{align}
Using (\ref{iden}), (\ref{ca}) can be equivalently written as
\begin{small}
\begin{equation}
	\mathsf{D}(b)\!=\! \left\{\mathbf{m}\! :h(c_{b'} - c_b)\!
	+\! \sum_{i=1}^{K} m_i\! \ln\!\left(\frac{\mu_{ib}}{\mu_{ib'}}\right)\!
	\ge 0,\!\!\ \forall \;b' \neq b
	\right\}. \label{ew}
\end{equation}
\end{small}
Rearranging the terms in (\ref{ew}) yields (\ref{db}).
\end{proof}
\begin{remark}
The above decoder is equivalent to a decoder that estimates $b$ as 
	\begin{equation}
		\widehat{b}= \arg \min_{b'}~ c_{b'}- \dfrac{1}{h}\sum_{i} m_{i} \log(\mu_{ib'}).
	\end{equation}
	Hence, it makes a decision based on the weighted L1 norm of the received vector subtracted from the total number of molecules. Hence, we term it as optimal weighted combining (OWC) decoder.
\end{remark}
\begin{remark}
The boundary between the decoding regions of two symbols $b$ and $b'$ is a hyperplane $\sum_{i} m_{i} \log (\mu_{ib}/\mu_{ib'}) +(c_{b'}-c_{b})h = 0$. Hence, the decision region associated with each constellation point is formed by the intersection of $K-1$ hyperplanes in the $K$-dimensional constellation space. 
\end{remark}
\begin{corollary} 
In MAxRSK, where $c_{b} =c$ and symbol $b$ is represented by $(cr_{1b},cr_{2b}, \cdots, cr_{Kb})$, with ratio $r_{ib}$ satisfying the constraint $\sum_{i} r_{ib}=1$, the decoding region for symbol $b$ can be simplified as
$\mathsf{D}(b)= \cap_{b'\ne b } \{ \mathbf{m}:  \sum_{i} m_{i} \log\left( (r_{ib}h+\lambda/c)/(r_{ib'}h+\lambda/c)\right) \geq 0\}$.
\end{corollary}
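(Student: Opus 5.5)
The plan is to specialize Theorem 1 directly. For MAxRSK every symbol carries the same total $c_{b}=c$, so for every pair $b,b'$ the left-hand side of (\ref{db}) satisfies $(c_{b'}-c_{b})h=0$. The constraint defining $\mathsf{D}(b)$ for a given $b'$ then becomes
\begin{equation*}
0 \geq \sum_{i} m_{i}\log(\mu_{ib'}/\mu_{ib}),
\end{equation*}
or, equivalently, $\sum_{i} m_{i}\log(\mu_{ib}/\mu_{ib'})\geq 0$ after flipping the ratio inside the logarithm.

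Next I will substitute the means for this constellation. Here $a_{ib}=c\,r_{ib}$, so $\mu_{ib}=c\,r_{ib}h+\lambda=c\,(r_{ib}h+\lambda/c)$. The common factor $c>0$ cancels in each ratio, which gives
\begin{equation*}
\frac{\mu_{ib}}{\mu_{ib'}}=\frac{r_{ib}h+\lambda/c}{r_{ib'}h+\lambda/c}.
\end{equation*}
This is exactly the argument of the logarithm in the claimed region.

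Finally, the condition ``for all $b'$'' in (\ref{db}) is the same as intersecting the individual half-spaces over $b'\neq b$. The term with $b'=b$ gives $0\geq 0$, which is trivially satisfied and can be dropped. This yields $\mathsf{D}(b)=\cap_{b'\neq b}\{\mathbf{m}:\sum_i m_i\log((r_{ib}h+\lambda/c)/(r_{ib'}h+\lambda/c))\geq 0\}$.

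There is no real obstacle here; the only care needed is bookkeeping. First, the ratios in the corollary are normalized so that $\sum_i r_{ib}=1$, unlike (\ref{eq:kd-const}). Under this normalization $c_b=c\sum_i r_{ib}=c$ holds for every symbol, and it is this equality that makes the total-molecule term vanish. Second, dividing by $c$ does not change the direction of the inequality, because $c>0$.
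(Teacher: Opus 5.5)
Your proposal is correct and follows the same route as the paper, which states the corollary as an immediate specialization of Theorem 1: set $c_{b}=c_{b'}=c$ so that the $(c_{b'}-c_{b})h$ term vanishes, then substitute $\mu_{ib}=c\,r_{ib}h+\lambda$ and cancel the common factor $c>0$ inside each logarithm. Your side remarks are also right: the $b'=b$ constraint is trivially satisfied, and it is the normalization $\sum_i r_{ib}=1$ that guarantees $c_b=c$ for every symbol.
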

\begin{corollary}
For rectangular lattice constellation with independent encoding where $i^{\text{th}}$ axis encodes a bit (or a set of bits), the decoding can be done independently for each bit (or each set of bits). Here, the decoding region for the bit 1 is given as
\begin{align}
	\hspace{-0.5em}\mathsf{D}(1)\!=\!\left \{m_{i}: m_{i}>\dfrac{(A_\mathrm{1}-A_\mathrm{0})h(t)}{\ln{((A_\mathrm{1}h(t)+\lambda)/( A_\mathrm{0}h(t)+\lambda))}}\right\}, \label{dreq}
\end{align}
	where $A_\mathrm{1}$ and $A_\mathrm{0}$ number of molecules are transmitted for bit $1$ and $0$ respectively. 
\end{corollary}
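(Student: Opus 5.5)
The plan is to specialize the ML region of the Theorem to the rectangular lattice structure and then read off the threshold. In a rectangular lattice constellation with independent encoding, each symbol is a vector $\mathbf{a}_b=(a_{1b},\dots,a_{Kb})$ whose $i^{\text{th}}$ coordinate depends only on the bit (or group of bits) carried by axis $i$. Here $a_{ib}\in\{A_0,A_1\}$ in the binary case, and the symbols range over the full Cartesian product, assumed equiprobable.

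First, factorize the likelihood. By (\ref{pmb}), $P(\mathbf{m}|b)=\prod_i P(m_i|a_{ib})$, where the $i^{\text{th}}$ factor depends on $b$ only through the bit $\beta_i$ on axis $i$. Since the symbol set is a product set, maximizing the product over $b$ is the same as maximizing each factor separately over $\beta_i$. Hence the ML decision splits into $K$ independent scalar ML decisions, and the bits are decoded independently. Equivalently, one can invoke the Theorem directly: for two symbols $b,b'$ that differ only in coordinate $i$, the condition in (\ref{db}) involves only the $i^{\text{th}}$ term, because $\log(\mu_{jb'}/\mu_{jb})=0$ for $j\ne i$ and $c_{b'}-c_b=a_{ib'}-a_{ib}$. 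Moreover, the region defined by pairs that differ in several coordinates is implied by the single-coordinate conditions, since the log-likelihood difference is additive over coordinates.

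Second, solve the scalar binary test. Apply (\ref{db}) with $b\leftrightarrow$ bit $1$ and $b'\leftrightarrow$ bit $0$ on axis $i$. The condition becomes
$$(A_0-A_1)h\ge m_i\log\frac{A_0h+\lambda}{A_1h+\lambda}.$$
Assuming $A_1>A_0$, the logarithm is negative. Dividing by it flips the inequality and gives
$$m_i\ge \frac{(A_1-A_0)h}{\ln\big((A_1h+\lambda)/(A_0h+\lambda)\big)},$$
which is (\ref{dreq}), with the channel written as $h(t)$ to reflect its time dependence. Ties on the boundary have probability zero under the continuous threshold, or can be assigned arbitrarily, so the strict inequality in the statement is harmless. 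For groups of bits per axis, the same argument gives a one-dimensional ML partition of $m_i$ by pairwise thresholds of this form.

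The main obstacle is the first step: justifying that joint ML reduces to per-axis ML. This relies on the product structure of both the likelihood and the constellation, and on uniform priors. Once that is stated, the remaining steps are routine algebra, with care needed only over the sign of the logarithm, which requires $A_1>A_0$.
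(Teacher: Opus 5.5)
Your proposal is correct and follows essentially the route the paper intends. The paper states this corollary without a separate proof, as a direct specialization of Theorem 1: for two symbols differing only on axis $i$, the sum in (\ref{db}) collapses to a single term, and dividing by the negative logarithm (which requires $A_1>A_0$) yields (\ref{dreq}). Your factorization argument just makes explicit the per-axis independence that the paper asserts. One small correction concerns ties: since $m_i$ is integer-valued, a tie has positive probability whenever the threshold happens to be an integer, so ``probability zero'' is wrong. The justification you also gave is the right one: on a tie both hypotheses have equal likelihood, so the tie can be assigned either way, and that is why the strict inequality is harmless.
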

The probability of symbol error is given by 
\vspace{-1em}
\begin{equation}
	P_s \;=\; 1 - \frac{1}{\mathcal{M}}~\sum_{b\in\mathcal{C}}~
	\sum_{\mathbf{m}\in\mathsf{D}(b)} P(\mathbf{m}\mid b),
	\vspace{-1em}
\end{equation}
where $\mathcal{M}$ is the set of symbols.

It can be observed from Theorem 1 that decoding in concentration-based modulation (including MAxCM, CSK etc.) requires CI, since concentration of molecules undergo attenuation during propagation in an environment-dependent manner. Estimating CI is often challenging and, in many practical MC scenarios, may even be infeasible. In MC systems operating in static environment, the channel does not vary over time. Hence, in such cases, channel estimation can be performed once and can be subsequently used for decoding. When the accurate theoretical models of the underlying propagation mechanisms and channel are known, it is possible to estimate the static channel via theoretical equations, provided that the underlying variables (including diffusion coefficients) are known. However, any mismatch between the known and actual values of these parameters (including modeling inaccuracies) can lead to a significant increase in decoding error probability. These difficulties become more pronounced in dynamic MC systems, where the channel varies with time. In such cases, periodic channel estimation is required, which can become extremely difficult and practically infeasible. 
Under these circumstances, the decoding strategy depends critically on whether the instantaneous channel gain $h$ is known or not. When $h$ is known, the optimal decoder can be employed utilizing (\ref{dbb}) and (\ref{db})), as described in this section. Notably, in MMC systems, this may result in a time-varying decoder that adapts to temporal variations in $h$. However, when $h$ is not known, decoding must rely on the CSI, which may consist of the complete distribution of $h(t)$ or partial knowledge of $h(t)$, such as its mean value, as discussed next.
\begin{table*}[!t]
	\centering
	\caption{Comparison Between Single-Axis and Multi-Axis rectangular MAxCM}
	\label{tabcom}
	\begin{tabular}{|c|c|c|c|c|c|}
		\hline
		\makecell{Order \\ $M^2$} & \makecell{Bits/symbol \\ $2N$} & 
		\makecell{Multi axis MAxCM(2,$M^2$) \\ Sym. sep. ($c'_{\mathrm{MA}}$)} &
		\makecell{Single axis MAxCM(1,$M^2$) \\ Sym. sep. ($c'_{\mathrm{SA}}$)} &  \makecell{Per-symbol \\ Budget}  & \makecell{Per-bit \\ Budget}    \\
		\hline
		4  & 2 & $2c$ & $\frac{4}{3}c$& $2c$  & $c$   \\
		
		16 & 4 &  $\frac{4}{3}c$&$\frac{8}{15}c$ &$4c$  & $c$   \\
		
		 64 &  6 & $\frac{6}{7}c$& $\frac{4}{21}c$ &$6c$  & $c$   \\
		\hline
	\end{tabular}
\end{table*}
\begin{enumerate}[leftmargin=0.5cm]
\item \textbf{When distribution of the channel $h$ is known:}
ML decoder selects the symbol that maximizes the mean likelihood, i.e.,
\begin{align}
\hat{b} = &\arg \max_{b} \mathbb{E}_h[P(\mathbf{m}\mid b,h)],\label{logp}
\end{align} 
where $P(\mathbf{m}\mid b,h)$ is given in (\ref{pmb}). 
Equivalently, we can also maximize the log likelihood as
\begin{equation}
	\hat{b} = \arg\max_{b} \log \mathbb{E}_h[P(\mathbf{m}\mid b,h)]. \label{19}
\end{equation}
\item \textbf{When partial information about the distribution of $h$ is known}: Since $\log(\cdot)$ is concave, from Jensen's inequality, we can get
\begin{equation}
	\log \mathbb{E}_h[P(\mathbf{m}\mid b,h)] 
	\geq 
	\mathbb{E}_h[\log P(\mathbf{m}\mid b,h)].
\end{equation}
Hence, maximizing $\mathbb{E}_h[\log P(\mathbf{m}\mid b,h)]$ provides a computationally tractable alternative for the exact ML rule. Consequently, using (\ref{pmb}), we get
\begin{align}
	&\mathbb{E}_h[\log P(\mathbf{m}\mid b,h)] \nonumber\\
	\hspace{-5em}&= \sum\nolimits_{i=1}^{K} \mathbb{E}_h\!\left[-(h a_{i b} + \lambda) + m_i \log(h a_{i b} + \lambda) - \log m_i!\right] \notag\\
	&= -\sum\nolimits_{i=1}^{K} (a_{i b}\,\mathbb{E}[h] + \lambda)
	+ \sum\nolimits_{i=1}^{K} m_i\,\mathbb{E}_h[\log(h a_{i b} + \lambda)]\nonumber\\
	& \hspace{2em}-\sum\nolimits_{i=1}^{K} \log m_i!.
\end{align}
Note that $\sum_{i=1}^{K} \log m_i!$ term can be dropped as it does not depend on $b$. Hence, the approximate expected-log ML decoder can be written as
\begin{align}
	\hat{b}_{\text{Elog}}
	= \arg\max_b 
	&\bigg[-\sum\nolimits_{i=1}^{K} a_{i b}\,\mathbb{E}[h]
	 \nonumber\\
	&+\sum\nolimits_{i=1}^{K} m_i\,\mathbb{E}_h[\log(h a_{i b} + \lambda)]\bigg].
\end{align}
This is an explicit decision metric that uses only the moments $\mathbb{E}[h]$ and the expectation 
$\mathbb{E}_h[\log(\cdot))]$; importantly, it does not require specifying the full PDF of $h$, if these expectations can be computed or approximated. 
\item \textbf{When only the mean of channel is known}: When mean of $h$ is known, a suboptimal decoder can determined by substituting $\mathbb{E}[h(t)]$ in place of $h(t)$ in (\ref{pmb}).
\end{enumerate}
This motivates us to propose optimal constellation design that
allows decoding without any channel knowledge. In MAxRSK,
where information is encoded in the ratios of concentrations,
it may be possible to design constellation such that decoding
does not require any CI. This is due to the ratio of concentrations of different types being preserved
by the channel, as the
type identity of molecules does not change during propagation.
This is in sharp contrast to wireless communications, where
constellation rotation errors can occur under fading. There are various benefits of MAxCM. First, in MAxCM, multiple axes provides multiple orthogonal directions allowing the use of higher modulation order and hence increasing SE. Further, MAxRSK (a variant
of MAxCM) can be designed to be invariant of channel variation
and hence, can provide superior performance in dynamic
channel conditions or the scenario where CI is
not present. It is evident that the proposed modulation MAxCM
provides a general framework to design various modulation
methods that can be suitable under specific scenario. In the next two sections, we explore two subclasses of MAxCM to demonstrate the above. First, we will present the rectangular MAxCM analogous to QAM and show how the use of multiple axes can provide better BER and SE. Next, we describe a special sub-class SMAxRSK constellation where
channel-independent decoding is possible.
\vspace{-1em}
\section{Rectangular MAxCM}
We now explore the rectangular MAxCM $(K,M)$ for MC, analogous to QAM modulation
in wireless systems, where information is encoded in both concentration and ratio of $K$ types of molecules and
discuss the gain that can be achieved by the use of  multiple axes. For better exposition of the idea, let us consider $K=2$. In particular, we consider a $M \times M$ MAxCM$(2,M^2)$, as shown in Fig. \ref{dr}(b), with $(i,j)$ symbol
at $(i c', j c')$, where $i,j \in \{0,1,\ldots,M-1\}$. This constellation corresponds to $2N$ bits per symbol where $N = \log_2 M$. We consider a budget of $c$ molecules per bit or equivalently $2Nc$ molecules per symbol
The per-symbol power of this constellation is given by
\begin{equation}
	\frac{1}{M^2} \sum_{i=0}^{M-1} \sum_{j=0}^{M-1} (i c' + j c')
	= (M-1)c'.
\end{equation}
Hence, under the same molecular budget, we have 
\begin{equation}
	c'_{\mathrm{MA}} = \frac{2Nc}{M-1}=\frac{2 c \log_2 M}{M-1}.
\end{equation}
To understand the benefit of using multiple axes, let us consider single-axis CM
MAxCM$(1,M^2)$ of the same order $M^2$ (as shown in Fig. \ref{dr}(a)).
Its per-symbol power is
\begin{equation}
	\frac{1}{M^2} \sum_{i=0}^{M^2-1} i c'
	= \frac{1}{M^2} \cdot \frac{(M^2-1)M^2}{2} c'
	= \frac{M^2-1}{2} c'.
\end{equation}
Hence, keeping the same budget, we get
\begin{equation}
	c'_{\mathrm{SA}} = \frac{4Nc}{M^2-1}=\frac{4c\log_2 M}{M^2-1}.
\end{equation}
Table \ref{tabcom} presents the comparison of single-axis and multi-axis MAxCM under equal per-bit molecular budget.
From the comparison, we observe two important insights:
\begin{enumerate}
	\item MAxCM$(2,M^2)$ can provide better symbol separation than 
	MAxCM$(1,M^2)$ for the same SE and molecular budget.
	This shows that the use of multiple axes can provide performance gains in
	symbol error rate (SER).
	\item MAxCM$(2,M^2)$ can achieve double the SE as compared to MAxCM$(1,M)$ while maintaining equal per-bit molecular budget
	and similar symbol separation. This demonstrates that employing multiple
	axes can improve SE without degrading SER performance.
\end{enumerate}
\vspace{-1em}
	\section{Symmetric MAxRSK}
	We now design and demonstrate a special case of MAxRSK where symbol decoding does not require any CI. The ability to design such modulation schemes highlights the versatility and strength of the proposed MAxCM and MAxRSK frameworks. To provide a clear exposition, we first focus on a simple case of MAxRSK with modulation order $M=2$ and two molecule types ($K=2$), which we term as BRSK, as shown in Fig. \ref{cons}.
	Here, the TX uses two different types of molecules for information transmission to the RX, i.e., A (Type 1) and B (Type 2). The intended information is encoded in a parameter $\rho \in [0,1] $, defined as the ratio of the concentration of the two types of molecules at the TX. In particular, for modulation order 2, the bits 0 and 1 are encoded using ratios $\rho_{0}$ and $\rho_{1}$, respectively.
	Hence, for bit 0, $a_{10}=\rho_{0}c$ number of Type 1 molecules and $a_{20}=(1-\rho_{0})c$ number of Type 2 molecules are transmitted from the TX, where $c$ is the molecular budget, i.e., the total number of molecules allocated for each transmitted bit. Similarly, for bit 1, $a_{11}=\rho_{1}c$ number of Type 1 molecules and $a_{21}=(1-\rho_{1})c$ number of Type 2 molecules are transmitted. 
	As we show later, the optimal receiver estimates the symbol from the ratio of received number of molecules corresponding to each type, i.e., $m_{1}/ m_{2}$. 	 
We will first consider the static scenario in next subsection which we later extend to dynamic scenario.
\vspace{-1em}
	\subsection{Static MC}
	In a static MC environment with immobile TX and RX (i.e., $D_{\text{RX}}=D_{\text{TX}}=D_\mathrm{TR}=0$), only the messenger molecules undergo Brownian motion. As a result, the propagation channel remains time-invariant over the entire observation interval. For simplicity, we have considered the equal probability of transmission of bit 0 and bit 1.
	\subsubsection{Optimal ML Decoder}
	From ($\ref{bin}$) and ($\ref{jointPMF}$), the joint PMF of $m_{1}$ and $m_{2}$ is
\begin{small}
\begin{align}
	\hspace{-0.65em}P(m_{1},m_{2}|b)\!=\! \prod_{i=1}^{2}\!\sum_{k=0}^{\min(a_{ib},m_{i})}\! \binom{a_{ib}}{k}\dfrac{ e^{-\lambda}\!\lambda^{m_{i}-k}}{(m_{i}\!-\!k)!}h^{k}(1\!-\!h)^{a_{ib}-k}. \label{oml}
\end{align}
\end{small}
	Substituting (\ref{oml}) in (\ref{dbb}), we get the ML decoding region for bit 1 as
		{\footnotesize
			\begin{align}
				&\mathsf{D}(1)=\Bigl\{(m_{1},m_{2}): \sum_{k_{1}=0}^{\min(a_{11},m_{1})} \sum_{k_{2}=0}^{\min(a_{21},m_{2})} \binom{a_{11}}{k_{1}} \binom{a_{21}}{k_{2}} \nonumber \\
				&\binom{m_{1}}{k_{1}} \binom{m_{2}}{k_{2}} k_{1}! k_{2}! \left(\dfrac{h}{(1-h)\lambda}\right)^{k_{1}+k_{2}} \geq \sum_{k_{3}=0}^{\min(a_{10},m_{1})} \sum_{k_{4}=0}^{\min(a_{20},m_{2})} \nonumber \\
				& \binom{a_{10}}{k_{3}} \binom{a_{20}}{k_{4}} \binom{m_{1}}{k_{3}} \binom{m_{2}}{k_{4}} k_{3}! k_{4}!\left(\dfrac{h}{(1-h)\lambda}\right)^{k_{3}+k_{4}} \Bigr\}. \label{bin2}
			\end{align}
		}
		The optimal decoding rule is hence given as
	\begin{small}
			\begin{align}
				\widehat{b}= \begin{cases}
					1, &  \text{if}\;(m_{1},m_{2})\in \mathsf{D}(1).\\
					0, & \text{otherwise}. \label{cases}
				\end{cases}
			\end{align}
		\end{small}
		\subsubsection{Poisson Approximation}
		Under the Poisson approximation for $m$'s distribution, as described in Section IIIB, from (\ref{pmb}), the joint PMF of $m_{1}$ and $m_{2}$ is
		\begin{align}
			P(m_{1},m_{2}| b)=\prod_{i=1}^{2} e^{-(a_{ib}h+\lambda)} (a_{ib}h+\lambda)^{m_{i}}/m_{i}!. \label{pmb2}
		\end{align}
		Using the result derived in Theorem 1 and setting $K=2$ with $c_{b'}=c_{b}=c$, the decoding region $\mathsf{D}(1)$ reduces to
		\begin{align}
			\hspace{-0.5em}m_{1} \ln{\dfrac{\mu_{11}}{\mu_{10}}} + m_{2} \ln{\dfrac{\mu_{21}}{\mu_{20}}} \geq 0 
			\implies \sum_{i=1}^{2} \ln \left(\dfrac{\mu_{i1}}{\mu_{i0}}\right)^{m_{i}} \geq 0. \label{d2}
		\end{align}
		(\ref{d2}) shows that the optimal ML decoder for BRSK-based static MC system under Poisson reception is a weighted combiner with weights dependent on the mean received signal strength at both RXPs. Putting $\mu_{ib}=a_{ib}h+\lambda$ and $\mathbf{a}_{b}=(\rho_{b}c, (1-\rho_{b})c)$, we get the following decoding region for bit 1 (assuming $\rho_{1} > \rho_{0}$)
		\begin{align}
		&m_{1} \ln{\dfrac{c\rho_{1}h+\lambda}{c\rho_{0}h+\lambda}} + m_{2} \ln{\dfrac{c(1-\rho_{1})h+\lambda}{c(1-\rho_{0})h+\lambda}} \geq 0, \nonumber \\
		&\implies\dfrac{m_{1}}{m_{2}}\geq -\dfrac{\ln \left(\dfrac{c(1-\rho_{1})h+\lambda}{c(1-\rho_{0})h+\lambda}\right)}{\ln\left(\dfrac{c\rho_{1}h+\lambda}{c\rho_{0}h+\lambda}\right)} \stackrel{\Delta}{=} \eta.
		\label{main}
		\end{align}
		Note that (\ref{main}) corresponds to a ratio-based decoder which means that the ratio $m_{1}/m_{2}$ is compared with a threshold $\eta$ to decode the transmitted bit. If the ratio is above the threshold, bit 1 is decoded, else bit 0 is decoded. The decoder is illustrated by a straight line in Fig \ref{cons}.
		\begin{figure}[h]
			\centering
			\includegraphics[trim={50mm 30mm 50mm 30mm},clip,width=0.75\linewidth]{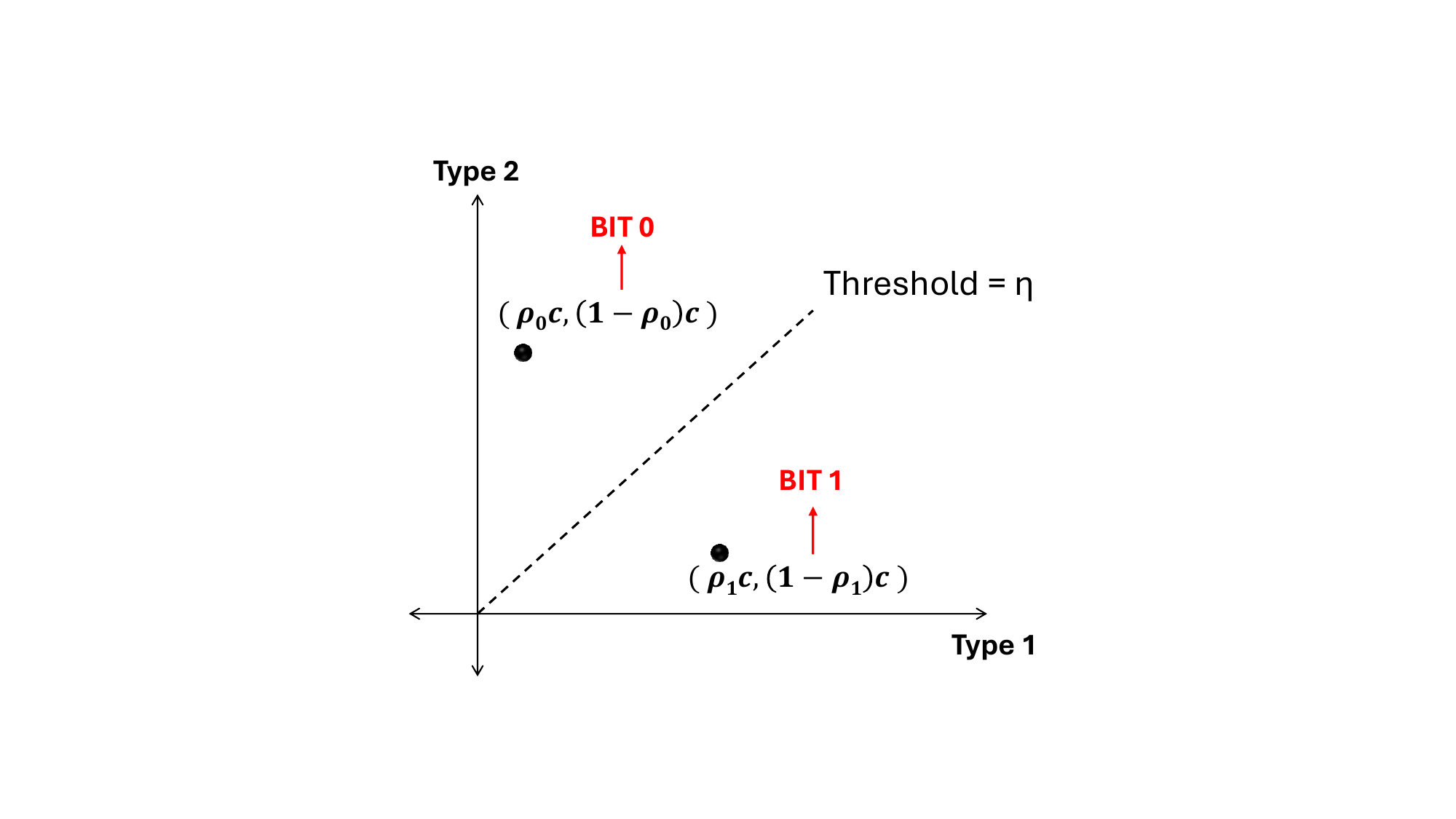}
			\vspace{-1em}
			\caption{An illustration of symbol constellation for BRSK along with decoding regions.}
			\vspace{-1em}
			\label{cons}
		\end{figure}
		
		While the ratio threshold $\eta$ generally depends on $h$, the dependency on $h$ can be removed by carefully designing $\rho_{0}$ and $\rho_{1}$. For example, let us consider $\rho_{0}+\rho_{1}=1$. This results in a symmetric BRSK (SBRSK) with symbols ($c\rho_{0}$, $c(1-\rho_{0})$) and ($c(1-\rho_{0})$,$c\rho_{0}$). Substituting $\rho_{0}=1-\rho_{1}$ in (\ref{main}) results in a cancellation of numerator and denominator, resulting in $\eta=1$. Hence, the decoder is simply a comparator given as
		\begin{align}
			\hat{b}= \begin{cases}
				1, &  \text{if}\;(m_{1} \geq m_{2}).\\
				0, & \text{otherwise}. \label{cases2}
			\end{cases}
		\end{align}
		It is very interesting to note that for the symmetric MAxRSK(2,2) case, the optimal decoder becomes independent of the channel condition, i.e., the value of $h$.

		Note that in OOK modulation, under Poisson assumption, the optimal decoding region for bit 1 is given as \cite{abhisheksir}
		\begin{align}
			\mathsf{D}(1)&=\left\{m: e^{-(N_\mathrm{a}h+\lambda)} (N_\mathrm{a}h+\lambda)^{m}/m! \geq e^{-\lambda} \lambda^{m}/m!\right\} \nonumber \\
			&=\left \{m: m>\dfrac{N_\mathrm{a}h}{\ln{((N_\mathrm{a}h+\lambda)/\lambda)}}\right \},
			\label{dook2}
		\end{align}
		where $N_\mathrm{a}$ is the total number of molecules transmitted for bit 1 in the OOK modultion.
		This implies that the threshold for detecting bit 1 in the OOK modulation scheme depends on the value of $h$. Therefore, the receiver requires the exact CI to correctly decode the bit.
			\subsubsection{Probability of Error}
			The probability of error for BRSK or MAxRSK(2,2) is given as
	\begin{align}
	\hspace{-0.7em}P_{\mathrm{e}}\!=\!\dfrac{1}{2}\!\left(P_{\mathrm{e}|b=0}+P_{\mathrm{e}|b=1}\right)\!, ~\text{where}\;\label{perr}
				    P_{\mathrm{e}|b}\!=\!\sum\nolimits P(m_{1},m_{2}|b). 
				    \end{align}
	\subsection{Extension to Higher Order}
Symmetric BRSK can be naturally extended to higher-order constellations while preserving
channel-independent decoding properties. To demonstrate this, we present the design approach for such extension along with an example.
Recall that for MAxRSK $(K,M)$, the symbol $b \in \{0,\ldots,M-1\}$ is given as $\mathbf{a}_b = c \mathbf{R}_b$, with $\mathbf{R}_b$
representing a normalized ratio vector $\mathbf{R}_b = (r_{1b}, r_{2b}, \ldots, r_{Kb})$ where
$	\sum_{i=1}^{K} r_{ib} = 1.$
Here, $c$ is the fixed molecular budget for all symbols.
From Cor. 1.1, we get the ML decoding rule for symbol $b$ as
\begin{equation}
	\sum_{i=1}^{K} m_i
	\log\!\left(
	\frac{c r_{ib} h + \lambda}{c r_{ib'} h + \lambda}
	\right) \ge 0,
	\qquad \forall~ b' \neq b.
	\label{eq:maxrsk_decoding}
\end{equation}
For channel-independent decoding, the log term for different $i$'s in
\eqref{eq:maxrsk_decoding} must be multiple of each other, to enable simultaneous cancellation of the log term from all summand terms. This means that the ratios $\frac{c r_{ib} h + \lambda}{c r_{ib'} h + \lambda}$ for different $i$'s can take one of the three possible values: they are 1, identical across different $i$'s, or the inverse of each other. This requirement can be satisfied if, for any pair of symbols
$b$ and $b'$, the corresponding ratio pair ($r_{ib}$, $r_{ib'}$) for different $i$'s must either be of equal proportion, or the same, or be swapped versions of one another. In other words, for each pair $v=(b,b')$, we have
\begin{align}
	(r_{ib}, r_{ib'})\in 
	\left\{
	(\alpha_v,\alpha_v), (\alpha_v,\beta_v), (\beta_v,\alpha_v)
	\right\},
	\ \text{for each}~ i,\label{eq:ext_app}
\end{align} 
along with the constraint $\sum_{i=1}^K r_{ib}=1$, where $\alpha_v$'s and $\beta_v$'s are arbitrary constants between 0 and 1.
Utilizing $(\ref{eq:ext_app})$, a symmetric
MAxRSK$(3,3)$ constellation
can be constructed with three molecular types and three symbols with channel invariant decoding.
Let $p \in (0,1/2)$ be a design parameter. The three constellation
symbols are defined as
\begin{align}
	\mathbf{a}_0 &= c~(p,\; 1-2p,\; p), \\
	\mathbf{a}_1 &= c~(1-2p,\; p,\; p), \\
	\mathbf{a}_2 &= c~(p,\; p,\; 1-2p).
\end{align}
Each symbol is a cyclic permutation of the other, and all satisfy
$\sum_i r_{ib} = 1$.
Assuming $p > 1-2p$, the decoding regions simplify to pairwise
comparisons of the received molecule counts as
\begin{align}
	\mathsf{D}(0) &: \quad m_1 - m_2 \ge 0,\;\; m_3 - m_2 \ge 0, \\
	\mathsf{D}(1) &: \quad m_2 - m_1 \ge 0,\;\; m_3 - m_1 \ge 0, \\
	\mathsf{D}(2) &: \quad m_1 - m_3 \ge 0,\;\; m_2 - m_3 \ge 0.
\end{align}
We can observe that these decision regions are independent of the channel gain $h$ and
depend only on the relative received molecule counts, thereby eliminating the need for CI.
The presented MAxRSK(3,3) scheme represents a fundamental extension of ratio-based molecular modulation beyond the binary case and constitutes the first non binary and multi-molecule RSK scheme that enables channel-independent decoding in the presence of noise.
The presented approach can also be used to construct further higher order SMAxRSK.
An illustration of the proposed SMAxRSK(3,3) modulation scheme, including its constellation symbols and decoding regions, is shown in Fig. \ref{extend}.
 \begin{figure}[h]
	\centering
	\includegraphics[width=\linewidth]{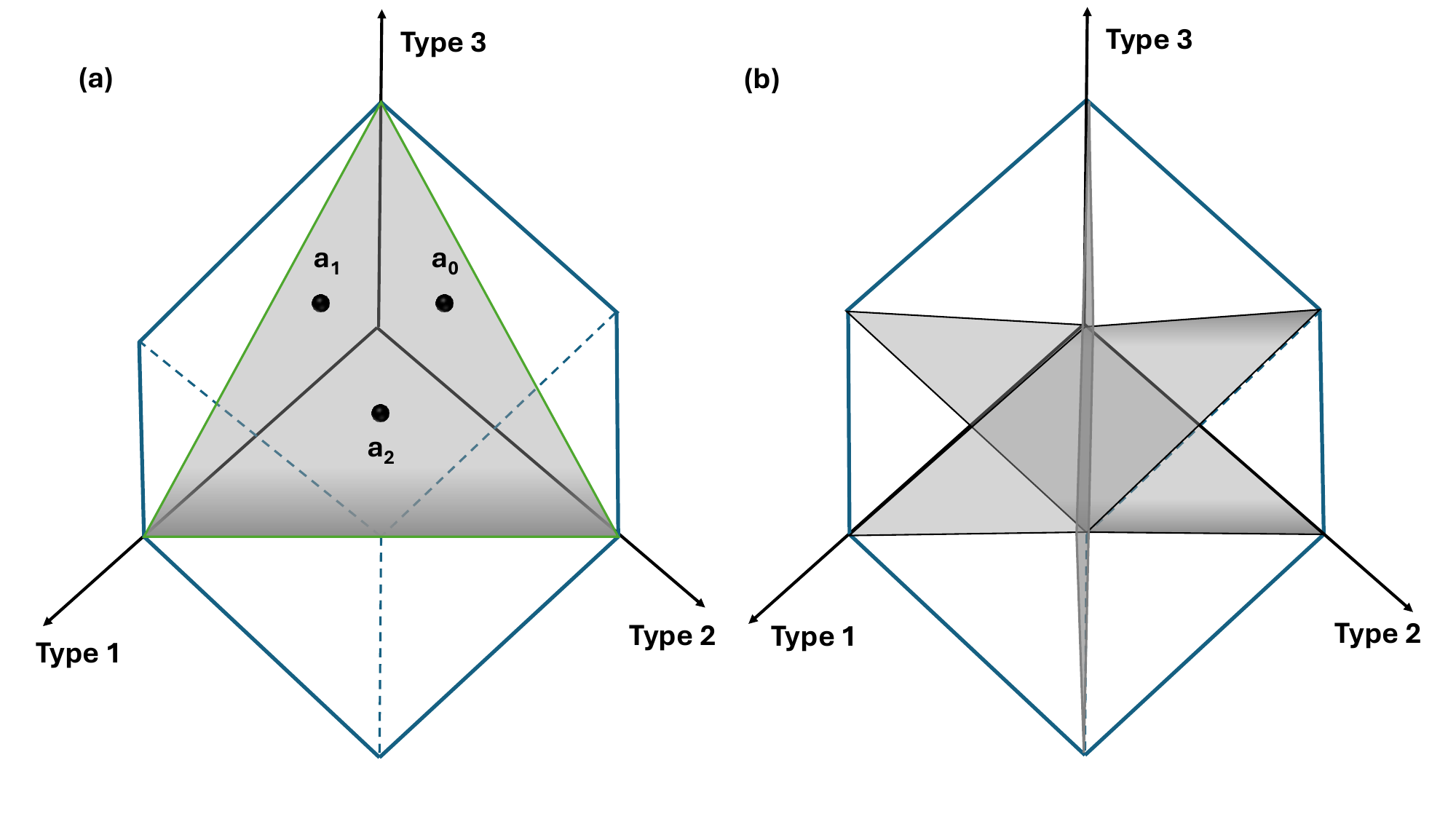}
	\vspace{-2em}
	\caption{An illustration of higher order ratio encoding as SMAxRSK(3,3) with (a) constellation symbols $\mathbf{a}_0$, $\mathbf{a}_1$, and $\mathbf{a}_2$ lying inside the plane $a_{1b}+a_{2b}+a_{3b}=c$ (shown with green boundary) and (b) decoding regions depicted as gray planes for each symbol.}
	\vspace{-1em}
	\label{extend}
\end{figure}
\subsubsection{Noise-free scenario}
Note from Cor. 1.1 that, in the absence of noise ($\lambda= 0$), the decoding law becomes independent of $h$ for all MAxRSK, regardless of the values of {$r_{ib}$}'s. This shows that, in the absence of noise, MAxRSK scheme does not require any CI, irrespective of its specific constellation design. Moreover, in cases where noise is small, MAxRSK can be considered robust to channel variations (at least in an approximate sense).
		\subsection{Dynamic MC}
		  We now study the performance of SMAxRSK in the dynamic MC environment where $h$ is varying over time. 
		As discussed in Section III, in such scenarios, decoding strategy generally depends on the fact whether $h$ is known or unknown. As shown earlier, the $h$'s distribution depends on propagation environment and receiver characteristics. For passive receiver, under diffusive propagation of SMs, and devices, the PDF of time-varying CIR $f_{h_{\mathrm{P}}(t)}(h)$, with $0 \leq h \leq \phi$, is given by \cite{stochastic}.
		Similarly, for the absorbing receiver, the channel PDF is given in \cite{absorb}.
		\subsubsection{SMAxRSK}
		 Since the decoder for SMAxRSK does not utilize any CI (cf. (\ref{eq:maxrsk_decoding})), the decoder remains the same even in dynamic MC environment and offers robustness in scenarios where CI/CSI is unavailable or impractical to obtain.
		The probability of error is given as
		$P_\mathrm{e}(t)= \mathbb{E}_{h(t)}[P_\mathrm{e}(h)],$
			where $h(t)$ is the value of $h$ at time $t$ and the error probability depends on time $t$. When noise is absent or significantly small, the channel invariant decoding extends to all MAxRSK.
			\subsubsection{OOK}
			For a comparative analysis, we also analyze OOK ($M=2$) in dynamic MC environment. 
		%
		In contrast, the decoder for OOK as given in (\ref{dook2}) requires the exact value or at least an estimate of the channel $h$. This increases the complexity significantly as the channel needs to be estimated regularly, as $h$ is time varying in a mobile MC environment. It is hence interesting to consider practical scenarios where exact channel is unknown or we can utilize partial information about the channel. We describe few such configurations below. 
		\paragraph{No channel information}
		In case, no estimate of $h$ is known, the OOK modulation becomes unusable while SMAxRSK modulation can be used efficiently.
		\paragraph{Mean channel information}
		When the mean of $h$ at time $t$ is known, the approximate estimate of threshold in (\ref{dook2}) can be computed by replacing $h$ by $\mathbb{E}[h]$. Hence, the decoding region for bit 1 is given as
		\begin{align}
			\mathsf{D}(1)= \left\{m: m>\dfrac{N_\mathrm{a}\mathop{\mathbb{E}}(h(t, \tau_{s}))}{\ln((N_\mathrm{a}\mathop{\mathbb{E}}(h(t, \tau_{s}))+\lambda)/\lambda)}\right\}, \label{dook}
		\end{align}
		which depends on the value of time $t$.
		
		For a system with a passive receiver, the mean of $h$ can be computed as \cite{stochastic}
		\begin{align}
		\hspace{-0.7em}	\mathop{\mathbb{E}}(h_{\mathrm{P}}(t, \tau_{s}))=\dfrac{V_{\text{RX}}}{\left(4\pi (D_{1}\tau_{s} + D_{2}t)\right)^{3/2}}e^{-\dfrac{r(0)^{2}}{4(D_{1}\tau_{s} + D_{2}t)}}, \label{meanhp}
		\end{align}
		where $r(0)$ is the initial distance between TX and RX.
		
		For an absorbing receiver, the information about the mean of channel is mathematically intractable \cite{absorb} and can only be computed numerically. Further, we can approximate the mean of the channel by replacing $r(t)$ in (\ref{CIR}) by $\mathbb{E}[r(t)]$, where $\mathbb{E}[r(t)]$ is the mean of the relative distance between the TX and the absorbing RX unit at time $t$. Using (\ref{pdfr}), the mean is given as \cite{absorb}
		\begin{align}
			&\mathop{\mathbb{E}}(r(t))=\sqrt{\dfrac{8D_\mathrm{TR}t}{\pi}}\exp(-\dfrac{r(0)^2}{8D_\mathrm{TR}t})+\left(r(0)+\dfrac{4D_\mathrm{TR}t}{r(0)}\right)\nonumber \\
			& \hspace{15em}\times\erf \left(\dfrac{r(0)}{8D_\mathrm{TR}t}\right). \label{meanr} 
		\end{align}
		
		\paragraph{Channel statistical information (CSI)}
		When an exact CI is not available, but the complete distribution of channel is known, we can use this distribution to determine a suboptimal threshold. Here, we determine the optimal threshold that minimizes the decoding error averaged over the channel distribution. Hence, the decoding region for OOK is given as:
		\begin{align}
			&\mathsf{D}_\text{OOK}(1)\!=\!\Bigl\{m: \mathbb{E}_h[P(m|b=1,h)] 
			\geq \mathbb{E}_h[P(m|b=0,h)]\Bigr\} \nonumber \\
			&=\biggl\{m: \int e^{-N_\mathrm{a}h} ((N_\mathrm{a}h+\lambda)/\lambda)^{m}  f_{h(t,\tau)}(h) ~ dh \geq 1 \biggr\}. \label{dook3}    
		\end{align}
	 
	 It is important to understand that in practical scenario, the devices may have inaccurate or partial information about CSI. Examples include cases when the information regarding the diffusion coefficient $D_{\text{inf}}$ of SMs is not accurate, or the parameters determining the movement of TX and RX devices may not be precisely known, or the actual propagation mechanism deviates from the theoretical one. In such cases of imperfect CSI, the estimate of threshold for OOK may not be correct and may increase the probability of bit error, as we observe in the numerical results.
	\begin{table}
	\centering
	\caption{Default values of different system parameters}
	\begin{tabular}{ | p{5.2cm} | p{2.5cm}| } 
		\hline
		Parameter & Value \\ \hline
		Initial distance between TX and RX ($r(0)$)	& 1 $\mu$m \cite{stochastic} \\ \hline
		Radius of the RX ($r_{\text{RX}}$) & 0.45 $\mu$m \\ \hline
		Diffusion coefficient of SMs ($D_{\text{inf}}$) & 5 $\times$ $10^{-9} \text{m}^{2}$/s \cite{stochastic} \\ \hline
		Diffusion coefficient of TX and RX ($D_\mathrm{TR}$) & 5 $\times$ $10^{-12} \text{m}^{2}$/s \\ \hline
		Bit duration ($T_{b}$) & 0.5 ms \cite{stochastic}\\ \hline
		Relative time of sampling ($\tau_{s}$) & 0.035 ms \cite{stochastic} \\ \hline
	\end{tabular}
	\label{tab2}
\end{table}
	\begin{figure}[t]
	\centering
	\includegraphics[width=0.95\linewidth]{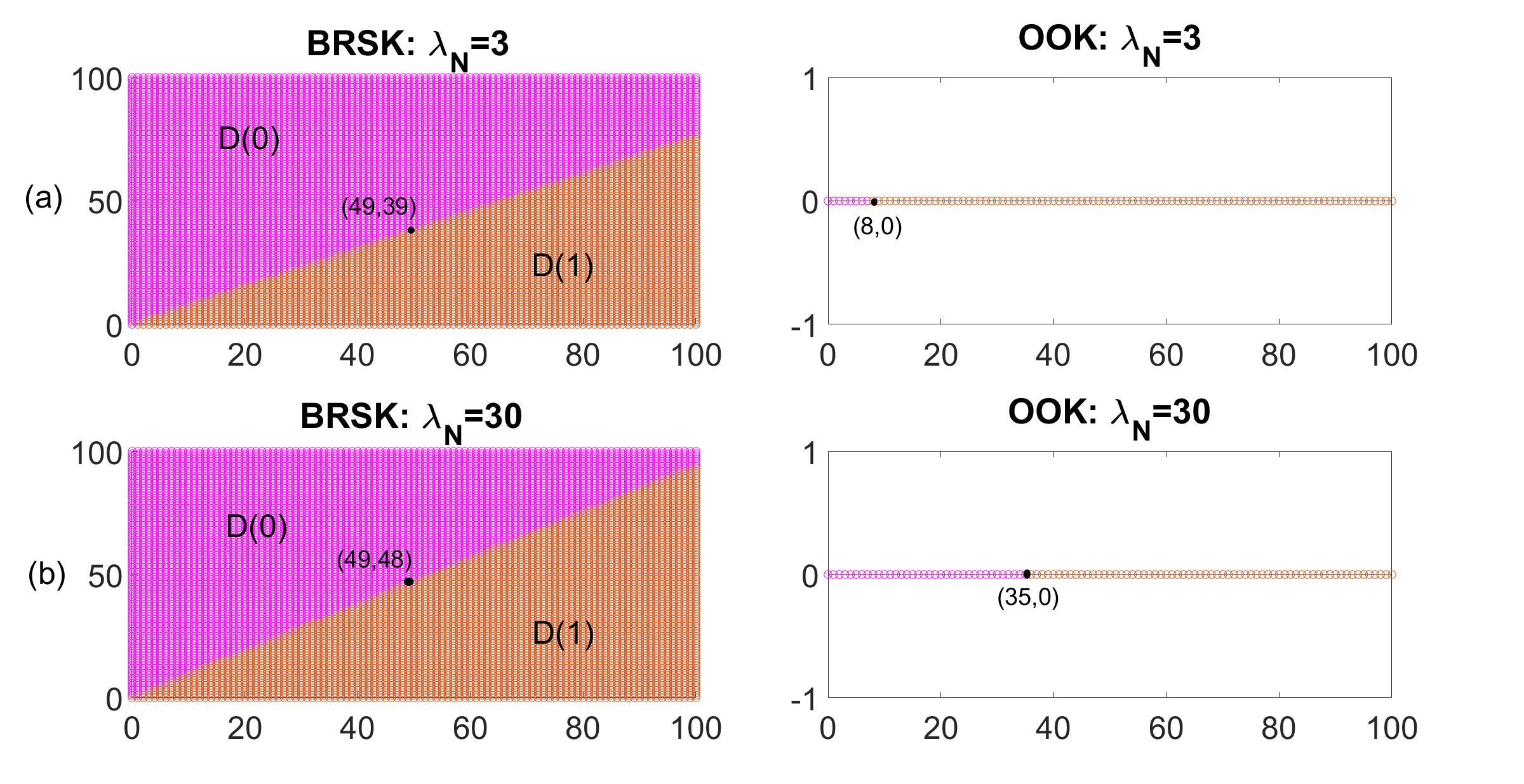}
	\vspace{-1em}
	\caption{Decoding region for $m_{1}$ and $m_{2}$ ranging from 0 to 100 for fixed $c=400$ and $h_{\text{LOW}}=0.0281$. Pink and orange region represent decoding region of bit 0 and 1.}
	\vspace{-1em}
	\label{hlow}
\end{figure}
\vspace{-1em}
		\section{Numerical Results}
		In this section, we present numerical results demonstrating the performance for static and mobile MC systems for BRSK modulation along with the results for the conventional benchmark OOK modulation scheme for a comparison. We provide the analysis for both passive and absorbing RX. For OOK, $N_{a}=2c$ number of molecules are transmitted for bit 1. The average number of molecules per symbol for both BRSK and OOK modulation schemes is kept the same, i.e., $c$ for a fair comparison.
		If not stated explicitly, the default values of system parameters are tabulated in Table \ref{tab2}. The number of simulation iterations is taken to be $10^{6}$. 
		\subsection{Static MC Environment}
		\subsubsection{Decoding regions for BRSK}
			To understand the decoding regions for BRSK, we consider an asymmetric BRSK with $\rho_{0}=0.4$, $\rho_{1}=0.8$, and present the decoding regions for bit 0 and bit 1 for fixed $c=400$ and $h_{\text{LOW}}=0.0281$. We can observe that the decision boundary is a line with its
		slope corresponding to the threshold $\eta$ given in (\ref{main}). It can be observed from (\ref{main}) that when the channel is good (i.e. high $h$) compared to the noise level ($\lambda$), the threshold converges to 
		 $\eta=-\log((1-\rho_{1})/(1-\rho_{0}))/\log(\rho_{1}/\rho_{0})$. On the other hand, in the presence of strong noise and weak channel condition, threshold becomes 1. We can observe the same in Fig \ref{hlow}. For instance, in Fig. \ref{hlow}(a), the decision boundary contains $m_{1}=49$ and $m_{2}=39$ for low value of $\lambda$, for which the simulated slope is 1.25 and theoretical slope $\eta=1.3$ (cf. (\ref{main})). Note that the slight mismatch is because $m_{i}$'s can take only integer values. Also in Fig. \ref{hlow}(b), for high values of $\lambda$, i.e., $\lambda=30$, $\eta$ approximately equals to one which corroborates with the simulated slope of 1.02, with $m_{1}=49$ and $m_{2}=48$. 
		  We can note that the ratio of $m_{1}$ and $m_{2}$ at the decision boundary remains the same for all values of $m_{1}$ and $m_{2}$, indicating that the decoder is ratio threshold decoder, when channel is known.
		  \begin{figure}[t]
	\centering
	\includegraphics[width=0.9\linewidth]{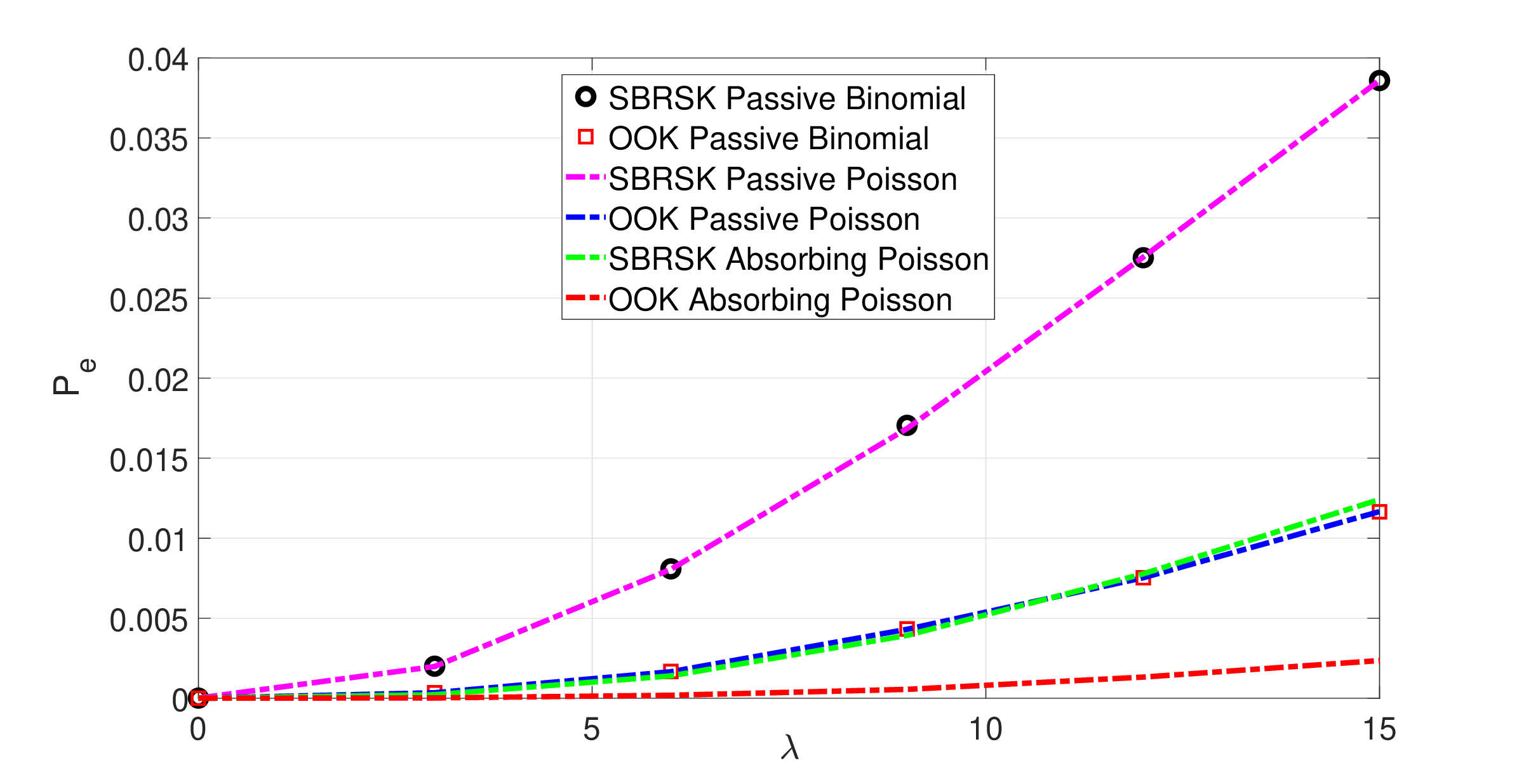}
	\caption{The variation of the probability of error with $\lambda$ for the SBRSK and OOK, along with the one under Poisson approximation. Here, $c=400$ and $N_{a}=2c=800$. Simulated BER matches very closely with the exact bit error under binomial reception of molecules.}
	\vspace{-1em}
	\label{noisem}
\end{figure}
\begin{figure}[t]
	\centering
	\includegraphics[width=0.9\linewidth]{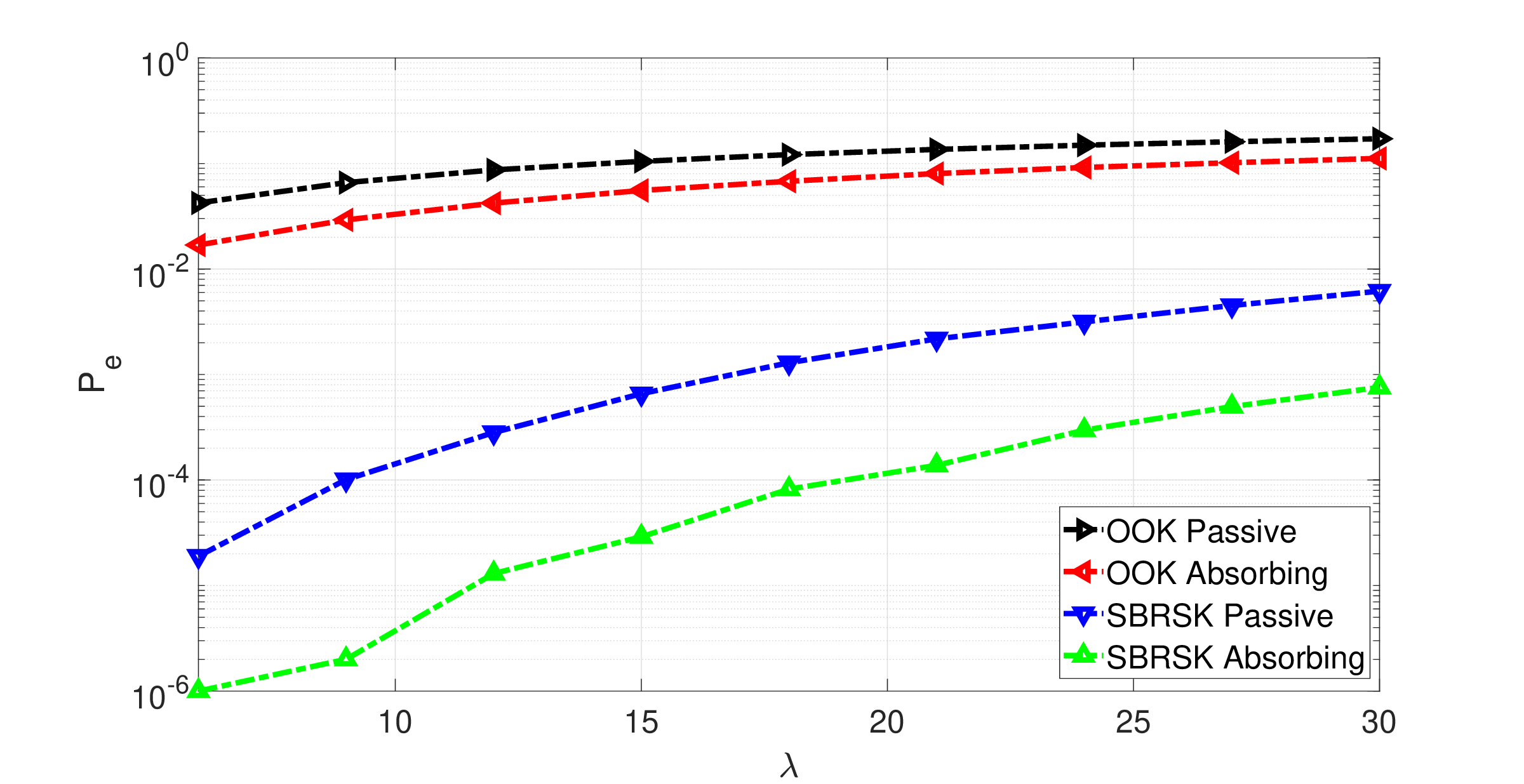}
	\caption{The variation of $P_e$ vs $\lambda$ for the SBRSK and OOK along with the one under Poisson assumption. Here, $c_{\text{RSK}}=2c=800$ and $c_{\text{OOK}}=c=400$.}
	\vspace{-1em}
	\label{morec}
\end{figure}
		\subsubsection{Performance of Symmetric BRSK}
		Next, we investigate the variation of bit error with $\lambda$ for passive and absorbing RX, under symmetric BRSK with $\rho_{0}=0$. Here, $c=400$ and $h_{\mathrm{P}}=0.0281$, which gives $ch_{\mathrm{P}}=11.24$. Also, for the absorbing RX case, we consider $T_{b}=10^{-5}$, which results in $h_{\mathrm{A}}=0.0369$ and consequently $ch_{\mathrm{A}}=14.76$. In Fig. $\ref{noisem}$, we plot the variation of bit error probability $P_{\mathrm{e}}$ versus $\lambda$ for SBRSK and OOK with binomial reception (using the decoder given in (\ref{cases})), along with $P_{\mathrm{e}}$ under Poisson approximation with decoder given in (\ref{cases2}). It is quite intuitive to observe that the bit error probability increases with $\lambda$. The Poisson approximation is accurate for the considered case since $h_{\mathrm{P}}$ and $h_{\mathrm{A}}$ is reasonably small. Moreover, it can be inferred that OOK performs better than SBRSK in static MC scenarios when the exact knowledge of the channel is available. This is because of the allocation of relatively lower number of molecules per type, i.e., $c$ in SBRSK as compared to $2c$ molecules for bit 1 in OOK. It can be observed in Fig. \ref{morec} that if we allocate more number of molecules to SBRSK, i.e., $2c$ in BRSK as compared to $c$ molecules for bit 1 in OOK, then SBRSK performs better than OOK for both passive and absorbing RX. However, we will show later that SBRSK still proves to be a better alternative in case of unknown CI even for lower molecular budget. 
			\begin{figure}[t]
			\centering
			\includegraphics[width=0.9\linewidth]{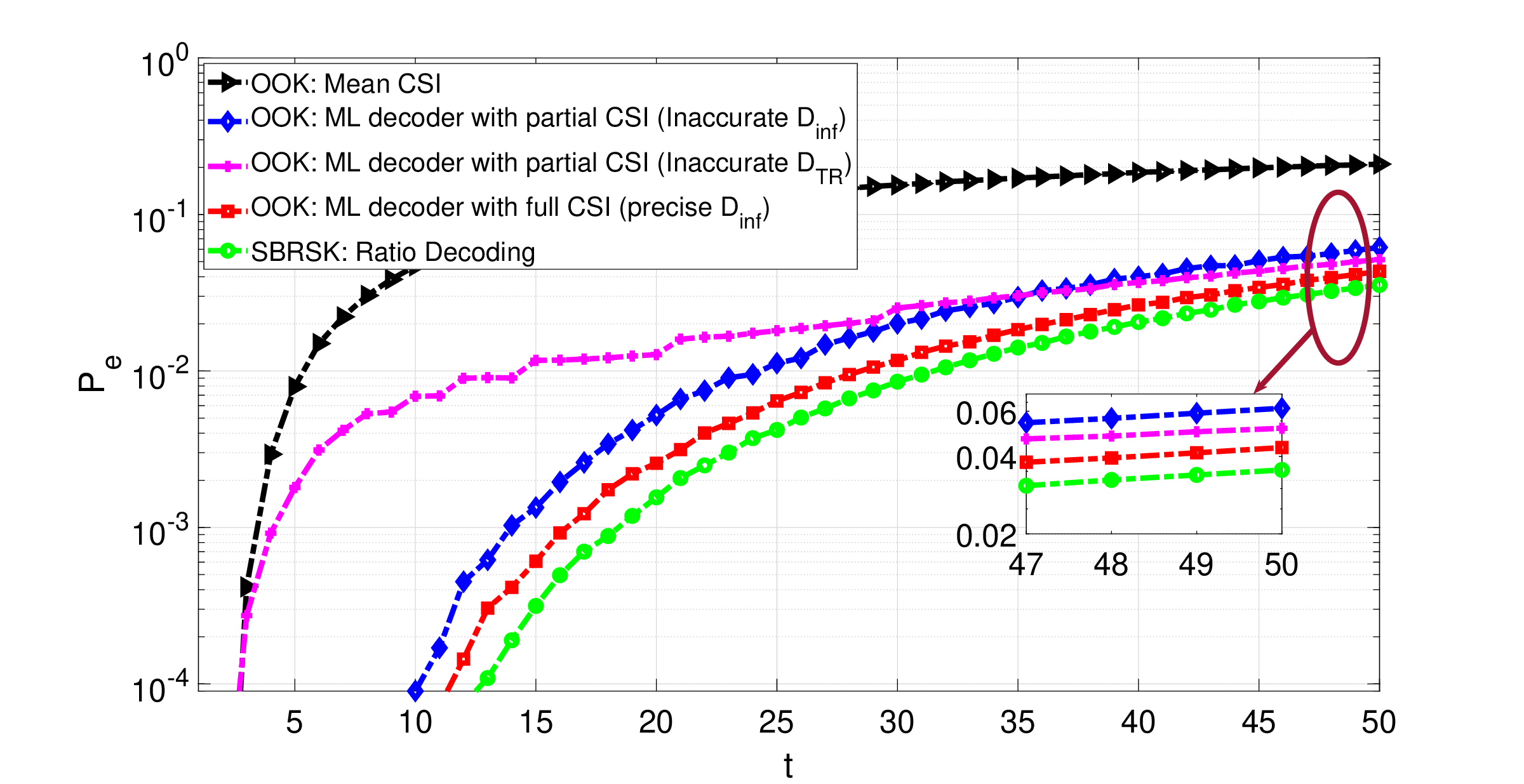}
			\caption{The variation of the $P_e$ with time for a dynamic MC system utilizing SBRSK, with decoder as given in (\ref{cases2}) and employing passive RX. Here, $D_{\text{inf}}=5 \times 10^{-9} \text{m}^{2}$/s, $\lambda=90$, and $D_\mathrm{TR}=5\times10^{-12} \text{m}^{2}$/s. The performance of OOK is also shown under mean CI and full CSI, utilizing decoders given in (\ref{meanhp}) and (\ref{dook3}), and under imperfect CSI with inaccurate $D_{\text{inf}}=9 \times 10^{-8} \text{m}^{2}$/s and inaccurate $D_\mathrm{TR}=5 \times 10^{-11} \text{m}^{2}$/s. The ML decoder with accurate/ inaccurate $D_{\text{inf}}$ utilizes full channel PDF for decoding the transmit symbol.}
			\label{meanf}
			\vspace{-1em}
		\end{figure}
		\vspace{-1em}
		\subsection{Dynamic MC Environment}
		\subsubsection{Performance of SBRSK} We now investigate the performance of mobile MC scenario for passive RX unit. Here, we take $c=2 \times 10^{5}$.
		The input stream of length $L=50$ is transmitted in one run and this process is repeated $10^{6}$ times. We have shown using (\ref{cases2}), that the optimal decoder for SBRSK is independent of the value of CIR, and hence remains the same for both mobile and static MC scenarios. Since channel varies with time for mobile MC system, Fig. \ref{meanf}, shows the variation of $P_e$ with time $t$ for both SBRSK and OOK, where time $t$ varies from $0$ to $L\tau_{s}$. Please note that we have compared the performance of OOK utilizing decoder under full CSI (as given in (\ref{dook3})) and mean CI (as given in (\ref{dook})). We can observe that the proposed SBRSK outperforms OOK due to the absence of the requirement of CI in determining decoding threshold.
		\subsubsection{Impact of inaccurate CSI}
		Fig. \ref{meanf} also shows the variation of the probability of bit error for the case when the receiver has an incorrect information of $D_{\text{inf}}$ or $D_\mathrm{TR}$ and performs decoding based on this imperfect CSI. As SBRSK does not require any CI, its performance is not affected due to imperfect CSI. 
		However, for OOK, the error increases significantly in the presence of imperfect CSI, arising from inaccuracies in either $D_{\text{inf}}$ or $D_\mathrm{TR}$. This is because OOK decoder is obtained by optimizing over the channel distribution, making it highly sensitive to the accuracy of the channel parameters, particularly $D_{\text{inf}}$ and $D_\mathrm{TR}$ \cite{stochastic}. In contrast, SBRSK decoding is independent of the channel. Fig. \ref{meanf} shows that symmetric BRSK outperforms OOK (with ML decoder), OOK with inaccurate $D_{\text{inf}}$, and approximate OOK (using mean threshold decoder), achieving relative performance gain of 18.37\%, 42.5\%, and 83.11\% respectively. 
		 It shows that OOK's error performance is quite sensitive to the threshold. 
		\begin{figure}[t]
			\centering
			\includegraphics[width=0.9\linewidth]{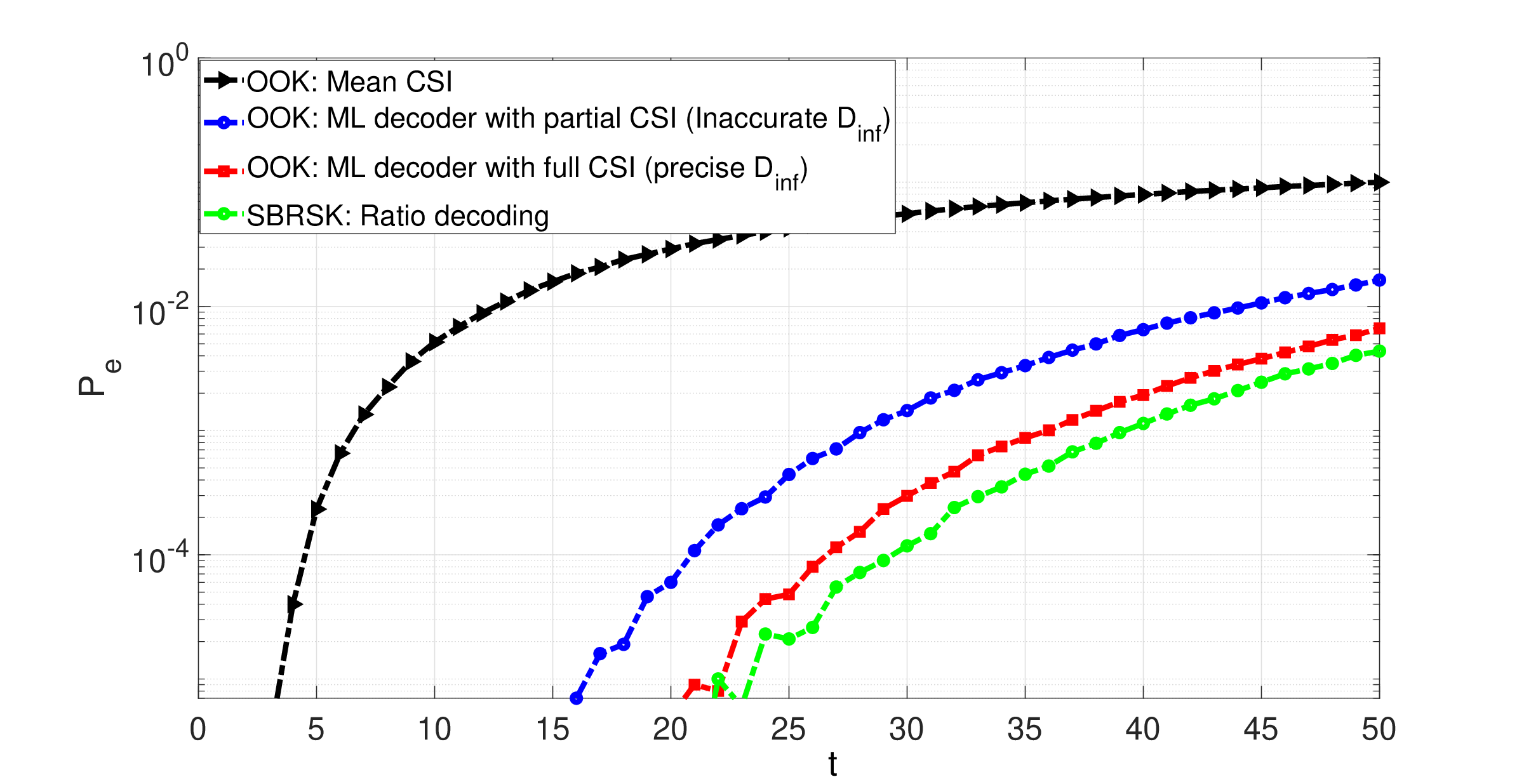}
			\vspace{-1em}
			\caption{The variation of the probability of error with time for a dynamic MC system utilizing SBRSK is provided, where decoder follows (\ref{cases2}) and an absorbing RX is considered with the following system parameters: $c=2 \times 10^{6}$, $\lambda=80$, $D_\mathrm{TR}=5\times10^{-11} \text{m}^{2}$/s, and $T_{b}=0.05$s.}
			\label{ab}
			\vspace{-1em}
		\end{figure}
		\begin{figure}[t]
			\centering
			\includegraphics[width=0.9\linewidth]{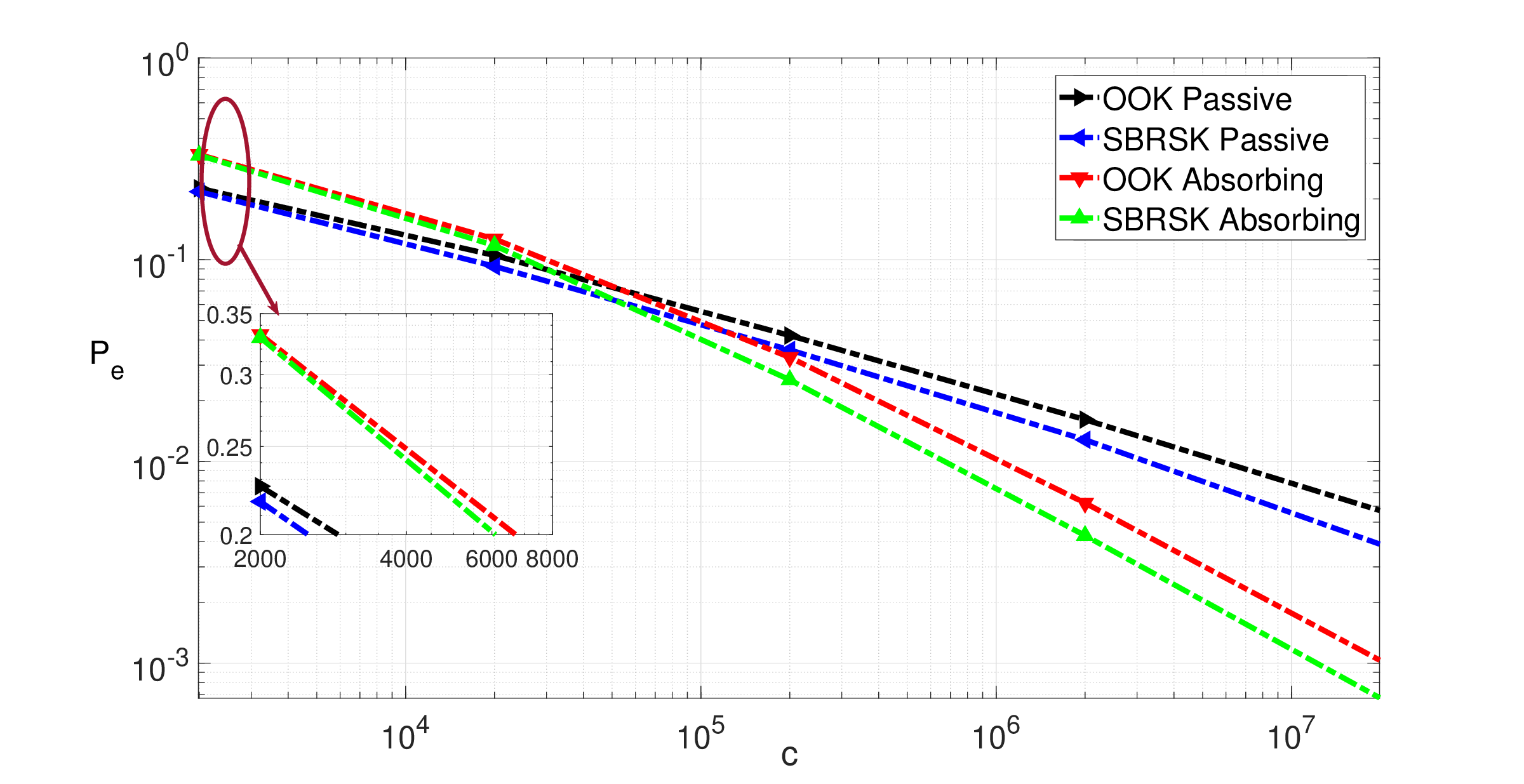}
			\vspace{-1em}
			\caption{The variation of $P_e$ with $c$ for a dynamic MC system utilizing SBRSK is provided, where the decoder follows (\ref{cases2}), and both passive and absorbing RX are considered.}
			\label{perrvsc}
			\vspace{-1em}
		\end{figure}
		
		Similar trends can be observed in Fig. \ref{ab} which shows the error probability of a mobile MC system with an absorbing RX. Here, the symmetric BRSK outperforms optimal OOK (with ML decoder), OOK with inaccurate $D_{\text{inf}}$, and approximate OOK (using mean threshold decoder), achieving the relative performance improvement of 34.6\%, 73.2\%, and 95.6\% respectively. It is interesting to note that we consider significant variation in the channel parameter, i.e., $D_\mathrm{TR}=5\times10^{-11}$, compared to $D_\mathrm{TR}=5\times10^{-12}$ in mobile passive RX case.
		\subsubsection{Impact of signal strength $c$~}
		Moreover, for poorly known or unknown channel conditions, we might not be able to improve much BER of OOK even for a very large value of $c$, while bit error performance of SBRSK can be substantially improved by increasing the number of transmitted molecules. This can be verified from Fig. \ref{perrvsc} that the gap between the error performance of SBRSK and OOK substantially improves with increasing values of $c$. Therefore, we can conclude from the figure that SBRSK outperforms optimally-decoded OOK in mobile MC system because of its robustness against time-varying channel fluctuations. 
		\begin{figure}[t]
			\centering
			\includegraphics[width=0.85\linewidth]{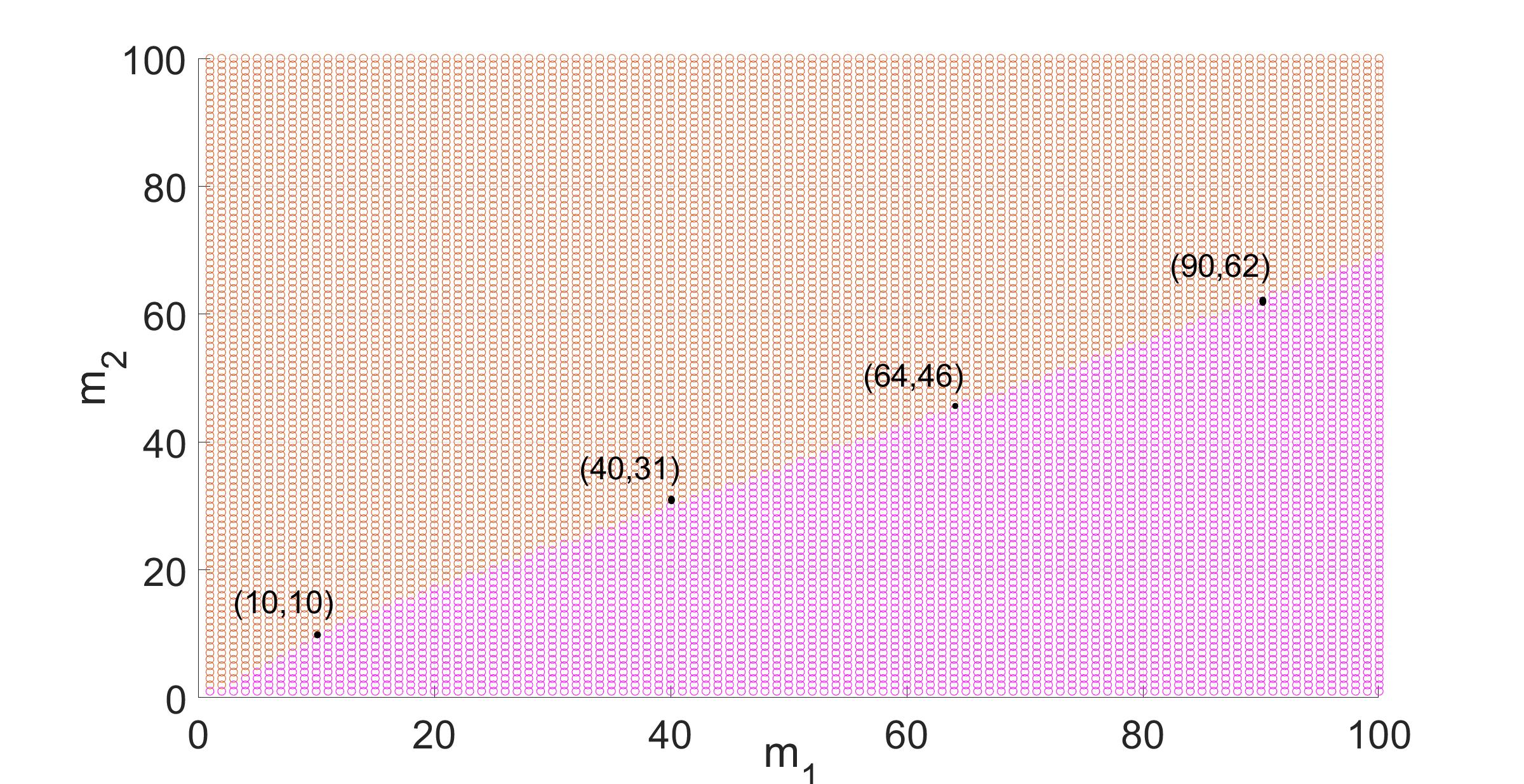}
			\vspace{-1em}
			\caption{Decoding region for bit 0 and bit 1 in a mobile passive RX utilizing BRSK with $\rho_{0}=0.4$ and $\rho_{1}=0.8$. The pink and orange region correspond to bit 1 and 0. Here, $D_\mathrm{TR}=5\times10^{-12} \text{m}^{2}$/s, $T_{b}=5\times10^{-4}$s and $t=50$s.}
			\label{asp}
			\vspace{-1em}
		\end{figure}
		\subsubsection{Decoding regions for BRSK} While decoding regions remain invariant to channel variation for SBRSK resulting in decoding regions separated by a line as decision boundary, the same is not true for assymmetric BRSK. Fig.  \ref{asp} shows the decoding regions for a dynamic MC system with a passive RX employing BRSK constellation, where the parameters are set to
		 $c=1000$, $\rho_{0}=0.4$ and $\rho_{1}=0.8$.
		 As expected, the decision boundary no longer is a straight line. It is interesting to note that large values of $m_{1}$ and $m_{2}$ correspond to the better channel conditions, where the effect of noise becomes negligible, the theoretical ratio threshold becomes $\eta=-\log((1-\rho_{1})/(1-\rho_{0}))/\log(\rho_{1}/\rho_{0})$, which evaluates to 1.58 for the considered parameters. However for small values of $m_{1}$ and $m_{2}$, where noise dominates, the threshold converges to 1. This behavior is evident in Fig. \ref{asp}, where the slope for small values of $m_{i}$'s is equal to $10/10=1$, whereas for large values it approaches $90/62=1.45$.
		In line with the analysis presented in Fig. $\ref{hlow}$, the theoretical and simulated values of slope are well matched. Although the decoding region slopes are comparable, the exact values of joint PMF of the received number of molecules for the mobile RX differs from that of the static RX. 
		\vspace{-1em}
		\section{Conclusion and Future Work}
		 This paper introduced a unified MAxCM framework that generalizes molecular modulation to higher-dimensions by jointly utilizing multiple molecular types. In this generalized framework, information can be encoded either in absolute molecular concentrations (concentration-based MAxCM) or in relative concentration ratios (ratio-based MAxRSK), providing a systematic lens through which existing schemes such as CSK, MoSK, and RSK emerge as special cases. 
		 
		 We then investigated two sub-classes of MAxCM. First, we presented rectangular MAxCM analogous to QAM and showed how the use of multiple axes can provide better BER and SE. Next, we discussed an important subclass of MAxCM, i.e., MAxRSK, and showed that if appropriately designed, it can eliminate the need for explicit CI at the RX. We derived the corresponding ML decoder and showed that it reduces to an optimal weighted combining rule. Furthermore, we analyzed a special case of MAxRSK, in particular, SBRSK, and showed that its constellation design admits a channel-independent decoding even in the presence of noise.
		 We also showed the robustness of SBRSK in dynamic
		 channel conditions compared to OOK while maintaining
		 a comparable decoding complexity under constraints on
		 molecular budget and bit duration, making it highly attractive not only for dynamic
		 MC scenarios but also for settings where CI is difficult or
		 infeasible to obtain.
		 Extensive numerical results demonstrated that SBRSK consistently outperforms conventional OOK in dynamic MC scenarios. This robustness persists even when only partial or imperfect CSI is available, in contrast to OOK, whose performance degrades sharply due to its sensitivity to CI.
		   Even in static MC channels with unknown parameters, SBRSK remains advantageous owing to its simplicity and robustness. Furthermore, analysis of asymmetric BRSK demonstrated that the simulated decoding thresholds corroborate well with the analytical values, validating the tractability of the proposed framework. The impact of the RX type was also examined, revealing that while both passive and absorbing receivers benefit from SBRSK, absorbing RXs amplify the relative performance gains of SBRSK, especially in mobile environments. These results highlight the importance of jointly considering modulation design and RX type in MC system optimization. We demonstrated that SBRSK naturally extends to higher-order (one example is MAxRSK(3,3)), while preserving
		   channel-independent decoding. 
		 
		 Collectively, the findings of this work prove that the proposed SMAxRSK is a low-complexity and robust approach, particularly well suited for dynamic and biologically realistic MC scenarios where accurate channel estimation is difficult or infeasible. At the same time, concentration-based MAxCM constellations offer flexibility for higher-order design.

		 Several promising research directions naturally arise from this work. One important avenue is the systematic optimization of higher-order MAxRSK and MAxCM constellations, including the design of symbol sets that balance robustness, spectral efficiency, and molecular resource consumption. Another direction is extending the analysis to heterogeneous mobility and propagation conditions, and
		 non-ideal environments including intra-body, soil, or microfluidic channels, where diffusion statistics may deviate from the standard models. On the application side, ratio-based modulation schemes hold significant potential for bio-cyber interfaces and synthetic biological systems, where simplicity, robustness, and minimal reliance on channel knowledge are paramount. 
		\vspace{-1.1em}
		\bibliography{IEEEabrv,RSK_reference}
		\bibliographystyle{IEEEtran} 
	\end{document}